\tikzset{arrows={[scale=1]}}
\tikzset{every edge/.style={draw,->,>=Latex,auto}}
\DeclareSymbolFont{rsfscript}{OMS}{rsfs}{m}{n}
\DeclareSymbolFontAlphabet{\mathrsfs}{rsfscript}
\newcommand{\Cerny}{\v{C}ern{\'y} }
\DeclareMathOperator{\IdentityRel}{\vartriangle}
\DeclareMathOperator{\TotalRel}{\triangledown}
\title{Primitive Automata that are Synchronizing}
\author{Igor Rystsov}{National Technical University of Ukraine, Kiev, Ukraine }{haryst49@gmail.com}{https://orcid.org/0000-0001-6002-0503}{}
\author{Marek Szyku{\l}a}{University of Wroc{\l}aw, Faculty of Mathematics and Computer Science, Wroc{\l}aw, Poland}{msz@cs.uni.wroc.pl}{https://orcid.org/0000-0001-5349-468X}{}
\authorrunning{I. Rystsov and M. Szyku{\l}a} 
\keywords{\v{C}ern\'{y} conjecture, permutation group, primitive, primitivity, reset word, simple idempotent, synchronizing automaton, synchronizing word, transition monoid, transformation semigroup}
\begin{document}
\maketitle
\begin{abstract}
A deterministic finite (semi)automaton is primitive if its transition monoid (semigroup) acting on the set of states has no non-trivial congruences.
It is synchronizing if it contains a constant map (transformation).
In analogy to synchronizing groups, we study the possibility of characterizing automata that are synchronizing if primitive.
We prove that the implication holds for several classes of automata.
In particular, we show it for automata whose every letter induce either a permutation or a semiconstant transformation (an idempotent with one point of contraction) unless all letters are of the first type.
We propose and discuss two conjectures about possible more general characterizations.
\end{abstract}
\section{Introduction}

We consider deterministic finite semiautomata (called shortly \emph{automata}) and the properties of their transition monoids.
An automaton is \emph{synchronizing} if it admits a \emph{reset} word, which is a word such that after reading it, the automaton is left in one known state, regardless of the initial state.
In the transition monoid of the automaton (or the semigroup acting on the set of states), this corresponds to the existence of a constant map, which is induced by a reset word.

The theory of synchronizing automata is most famous due to the \Cerny conjecture, which says that every synchronizing automaton with $n$ states admits a reset word of length at most $(n-1)^2$ \cite{C1964Poznamka}.
This longstanding open problem from~1969 motivated researchers to develop a vast number of results.
The currently best general upper bound is cubic in $n$ \cite{Pin1983OnTwoCombinatorialProblems,S2019ImprovementRecentUpperBound,S2018ImprovingTheUpperBound}.
Most of the research was collected in the recent survey \cite{V2022Survey}; see also the older ones \cite{KV2021Survey,V2008Survey}.

Here, we consider the possibilities of relating the primitivity of the transition monoid with its synchronizability.
Both these properties often appear together in the literature.

We define the primitivity in analogue to that in the context of permutation groups, which means that there is no non-trivial congruence on the set of states preserved by the action of letters.
This is often useful since we can then construct a smaller \emph{quotient} automaton, where a state represents a class in the original one.
So such congruences are used to derive results, in particular, upper bounds on the length of the shortest reset words in particular cases \cite{BFRS21SynchronizingStronglyConnectedPartialDFAs,GK2013AutomataRespectingIntervals,V2007WeaklyMonotonic}.
There are also some bounds stated for primitive synchronizing automata \cite{AR2016Semisimple} (primitive automata are called there \emph{simple}) and other results relating the primitivity and the synchronizability of automata \cite{H2021CompletelyPrimitiveStateSynchronizing,H2021SyncMaximalPermutationGroups}.

A very related problem concerning permutation groups was the subject of extensive research \cite{ABCRS2016PrimitiveGroups,ACS2017SynchronizationAndItsFriends,AS2006SynchronizingGroups}.
The problem was to characterize when a primitive permutation group, after adding one non-permutational transformation, results in a synchronizing semigroup.
The transformations of this property have been successfully characterized.
However, the whole study concerns only the case when the group is primitive, which is a strong condition from the automata point of view.
In many cases, the group contained in the transition monoid of an automaton is not only non-primitive but non-transitive or even trivial (just if the automaton has no permutational letters).
This question is naturally generalized to semigroups, where the non-permutational transformations also contribute to the primitiveness of the transition monoid.

Hence, our specific research question is the following:
Under what additional condition(s) every primitive automaton is synchronizing?

Additionally, if, in some cases, primitivity implies synchronizability, then sometimes we could relax necessary conditions, e.g., where an automaton needs to be both primitive and synchronizing \cite{R2015PrimitiveAndIrreducibleAutomata}, or we could obtain a synchronizing automaton when needed by ensuring its primitivity instead of other properties, e.g., we could think about auxiliary constructions that need to be synchronizing such as the \emph{induced automata} \cite{BFRS21SynchronizingStronglyConnectedPartialDFAs,BS2016AlgebraicSynchronizationCriterion}.

\subsection{Our contribution}

We propose criteria that presumably are sufficient to imply the synchronizability from the primitivity of an automaton.
We discuss two variants of the conjecture (weak and strong) referring to the shape of non-permutational transformations induced by the letters.
We show that they cannot be (much) relaxed and provide experimental support.

Based on the literature results, we show that the implication holds in several cases.
In particular, we prove that it holds for automata with permutational and semiconstant letters, which are a generalization of automata with simple idempotents \cite{H2022ConstrainedSynchronization,R2000EstimationSimpleIdempotents}.
This class is a restricted case covered by our conjecture in the strong variant.

\noindent\textbf{Note}:
The weak variant of our conjecture in a stronger form has been recently solved by Mikhail Volkov \cite{V2023SynchronizationOfPrimitiveAutomata}, together with several new results concerning our problem.

\section{Preliminaries}

An \emph{automaton} $\mathrsfs{A}$ is a triple $(Q, \Sigma, \delta)$, where $Q$ is a finite set of $n$ elements called \emph{states}, $\Sigma$ is a finite non-empty set of \emph{letters}, and $\delta \colon Q \times \Sigma$ is a totally-defined \emph{transition function}.
The transition function is naturally extended to words (finite sequences of elements from~$\Sigma$) in~$\Sigma^*$.
Given a word $w \in \Sigma^*$, let $\delta(w)$ denote its \emph{induced transformation}, which is a function (transformation, map) $\delta(w)\colon Q \to Q$ defined by $\delta(w)(q) = \delta(q,w)$.

For a set $T$ of maps $Q \to Q$, the transformation monoid $(Q, M)$ generated by $T$ and acting on $Q$ is denoted by $\langle T\rangle$, where $M$ is the set of all transformations that can be obtained from maps from $T$ by composition.
For an automaton $\mathrsfs{A} = (Q, \Sigma, \delta)$, its associated \emph{transition monoid} is $\langle \{\delta(a) \mid a \in \Sigma\}\rangle = (Q,M)$, where $M$ is the set of all maps $\delta(w)$ for every word $w \in \Sigma^*$.
In the literature, there are many names for a transformation monoid, e.g., transformation semigroup, operand \cite{CP1967TheAlgebraicTheoryOfSemigroups}, polygon \cite{ASSSS1991GeneralAlgebra}; and it is also an algebra with unary operations \cite{S1978CorrespondenceUnaryAlgebra}.

We use the right-hand side convention and denote by $(q)f$ the \emph{image} of the state $q \in Q$ under the action of $f\colon Q \to Q$.
For a subset of states $S \subseteq Q$, the \emph{image} of the map $f$ we denote by $(S) f = \{(q)f \mid q \in S\}$.
The cardinality $|(Q) f|$ is the \emph{rank} of $f$.
The \emph{deficiency} (or \emph{co-rank}) of $f$ is $n-|(Q) f|$.
Maps of rank $n$ (equivalently, of deficiency $0$) are permutations (bijections on $Q$).

A transformation monoid $(Q,M)$ is \emph{transitive} (or \emph{strongly connected}) if for every two states $s,t \in Q$, there is a transformation $f \in M$ such that $(s)f = t$.
If a transformation monoid contains a map of rank $1$, then the monoid is called \emph{synchronizing}.
A pair of different states $s,t \in Q$ are called \emph{compressible} if there is a transformation $f \in M$ such that $|(\{s,t\})f| = 1$; then $f$ \emph{compresses} the pair.
It is well known that a transformation monoid is synchronizing if and only if every pair of states is compressible \cite{C1964Poznamka}.

We transfer the above terminology from the transition monoid to the automaton and from transformations to words and letters that induce them.
Thus, a synchronizing automaton is one that admits a word of rank $1$.
Such a word is also a \emph{reset} word.

\subsection{Relations and Primitivity}

Let $(Q, M)$ be a transformation monoid.
Then the action of $M$ on $Q$ naturally continues on the square $Q \times Q$ by components:
\[ (s, t)f = ((s)f, (t)f), \text{ for each } f \in M, s,t \in Q .\]
For a binary relation $\rho \subseteq Q \times Q$, we put:
\[ (\rho)M = \{(s,t)f \mid (s,t) \in \rho, f \in M\} .\]
A relation $\rho$ is called \emph{invariant} for $(Q, M)$ if $(\rho)M = \rho$.
If $\rho$ is additionally an equivalence relation, then it is a \emph{congruence} of $(Q, M)$.
The equivalence classes of this congruence are called \emph{blocks}.

A relation $\rho$ is \emph{trivial} if it is the identity relation (\emph{diagonal}) $\IdentityRel_Q = \{(q,q) \mid q \in Q\}$ or the total relation (\emph{square}) $\TotalRel_Q = Q \times Q$.

\begin{definition}
A transformation monoid is \emph{primitive} if it does not have any non-trivial congruence.
An automaton is \emph{primitive} if its transition monoid is primitive.
\end{definition}

Letters of deficiency $0$, whose induced transformation is a permutation are called \emph{permutational}, and these transformations generate a permutation group acting on $Q$ and contained in the transition monoid.

\section{The Conjecture on Primitivity implying Synchronizability}

Every automaton with a letter of certain deficiency and a primitive group generated by the permutational letters is synchronising \cite{ABCRS2016PrimitiveGroups}.
This holds for deficiencies $1$, $2$, $n-3$, and $n-4$, and also for other deficiencies if the letter's map additionally has some properties.
Hence, it is natural to use deficiency as an additional condition for our question.

Our best guess for a general conjecture is the following:
\begin{conjecture}[Weak variant]\label{con:weak}
Every primitive automaton with permutational letters and letters of deficiency $1$ is synchronizing unless all letters are permutational.
\end{conjecture}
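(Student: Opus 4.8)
A natural approach toward Conjecture~\ref{con:weak} is the following. The plan is to prove the contrapositive: if $\mathrsfs{A}$ is primitive but not synchronizing, then $(Q,M)$ possesses a non-trivial congruence, contradicting primitivity; hence every primitive automaton with a letter of deficiency~$1$ (and no other non-permutational letters) is synchronizing. As a first reduction, consider the minimal two-sided ideal $K$ of $M$, that is, the set of transformations of the minimal rank $r$. Since $K$ is an ideal, the relation ``$s \equiv t$ iff $(s)k = (t)k$ for every $k \in K$'' is a congruence: it is an intersection of kernels, hence an equivalence relation, and if $s \equiv t$ and $f \in M$, then $fk \in K$ for all $k \in K$, so $(s)fk = (t)fk$. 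By primitivity $\equiv$ is trivial. If it equals $\TotalRel_Q$, then every element of $K$ is a constant map, so $r = 1$ and $\mathrsfs{A}$ is synchronizing; hence one is reduced to the case where $\equiv$ equals $\IdentityRel_Q$, i.e.\ $K$ separates the states, $r \ge 2$, and $\mathrsfs{A}$ is not synchronizing.

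In that case, fix a letter $a$ of deficiency~$1$ and let $\{p,q\}$ be the unique pair with $(p)\delta(a) = (q)\delta(a)$. The candidate congruence is
\[
  \theta \;=\; \IdentityRel_Q \;\cup\; \{(s,t) \in Q \times Q : \{s,t\} \text{ is compressible}\}.
\]
It is non-trivial without effort: $(p,q) \in \theta$ since $\delta(a)$ compresses $\{p,q\}$, while some pair is not compressible because $\mathrsfs{A}$ is not synchronizing, so $\theta \neq \TotalRel_Q$. It is invariant under the permutational letters: $\delta(g)$ has a finite order $m$ as a permutation, so $\delta(g)^{m-1} = \delta(g)^{-1} \in M$, and if a word $u$ compresses $\{s,t\}$, then $\delta(g)^{m-1}u$ compresses $\{(s)\delta(g),(t)\delta(g)\}$. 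As every letter is either permutational or of deficiency~$1$, it therefore remains to show that $\theta$ is transitive and invariant under every letter of deficiency~$1$; then $\theta$ is a non-trivial congruence, contradicting primitivity, and the proof is complete.

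This last step is where I expect the real difficulty to lie. Both properties can fail for automata with letters of larger deficiency --- a transformation can carry a compressible pair to a non-compressible one, and ``compressible'' is not transitive in general --- so the argument must genuinely exploit the deficiency-$1$ hypothesis. The intuition is that a letter $b$ of deficiency~$1$ is ``almost'' a permutation: the transformation it induces on the $2$-element subsets of $Q$ is injective except that, writing $\{p_b,q_b\}$ for the pair it collapses, it identifies $\{p_b,x\}$ with $\{q_b,x\}$ for every $x$ and sends $\{p_b,q_b\}$ itself to the diagonal; and, $M$ being finite, a suitable power of $\delta(b)$ acts invertibly on the states that remain relevant, so a single contraction should be ``undoable'' in the spirit of the permutational case. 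Turning this into a proof appears to require tracking how the collapsed pairs of the non-permutational letters, and their orbits under the group generated by the permutational letters, are positioned among the minimal-rank images, i.e.\ the completely simple (Rees matrix) structure of $K$. An essentially parallel route is to work instead with the relation $\mu = \IdentityRel_Q \cup \{(s,t) : s,t \in R \text{ for some minimal-rank image } R\}$, which is reflexive, symmetric, and invariant (as $|R| = |(R)w| = r$ for every word $w$, each word, in particular $\delta(a)$, acts injectively on every minimal-rank image); its transitive closure is then a congruence (the transitive closure of a reflexive, symmetric, invariant relation being always a congruence), and, being different from $\IdentityRel_Q$ since $r \ge 2$, it must equal $\TotalRel_Q$ by primitivity; but no minimal-rank image contains both $p$ and $q$ because $\delta(a)$ identifies them, so $(p,q) \notin \mu$, and we are left to rule out the existence of a proper, invariant, non-transitive relation such as $\mu$ under the deficiency-$1$ hypothesis. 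Finally, I would try to use as base cases the configurations already settled in the literature --- when the permutational letters generate a primitive permutation group, one letter of deficiency~$1$ already forces synchronization \cite{ABCRS2016PrimitiveGroups}, and the case of simple and semiconstant idempotents is understood \cite{R2000EstimationSimpleIdempotents} --- so that the genuinely new work lies in the regime where the permutation group generated by the permutational letters is small or intransitive.
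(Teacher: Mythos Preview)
The statement you are attempting is Conjecture~\ref{con:weak}, and the paper does \emph{not} prove it: it is left as an open conjecture, with a note that it was subsequently settled by Volkov. What the paper actually proves is only the special case of PSc-automata (permutational plus semiconstant letters), and it does so by a different, graph-theoretic route: it shows that the Rystsov relation $\pi$ is cyclic, so that its transitive closure is already an equivalence relation and hence a congruence, which by primitivity forces the Rystsov graph $\Gamma(\pi)$ to be strongly connected, and then invokes the known criterion (Theorem~\ref{thm:PU-Synchronizing}). There is thus no ``paper's own proof'' of the full conjecture to compare against.

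Your proposal is, as you yourself acknowledge, not a proof either. The preliminary steps are sound: the kernel-of-$K$ congruence correctly reduces to the case $r\ge 2$; your relation $\theta$ is indeed reflexive, symmetric, non-trivial, and invariant under the permutational letters; and your alternative $\mu$ is a reflexive, symmetric, invariant relation whose transitive closure is forced by primitivity to be $\TotalRel_Q$. But neither route closes. For $\theta$ you still lack transitivity and invariance under the deficiency-$1$ letters, and the ``undo one contraction by a power of $b$'' heuristic does not deliver invariance: a power of $\delta(b)$ acts as a permutation only on the eventual image of $\delta(b)$, not on arbitrary states carrying a compressible pair. For $\mu$, the observation that $(p,q)\notin\mu$ while $(p,q)\in\mu^*=\TotalRel_Q$ is not a contradiction at all --- it merely says $\mu$ is non-transitive, which is perfectly compatible with primitivity. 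So the gap you name is real and is exactly the missing idea; the paper does not fill it, and its semiconstant argument does not extend, since an arbitrary deficiency-$1$ map need not be an idempotent and the decomposition into unitary maps used in Lemma~\ref{lem:ScToU} is unavailable.
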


The class of these automata contains, in particular, \emph{almost-group} automata \cite{BN2020SynchronizingAlmostGroup}, which have only one non-permutational letter of deficiency $1$, and it is known that at random they are synchronizing with high probability.

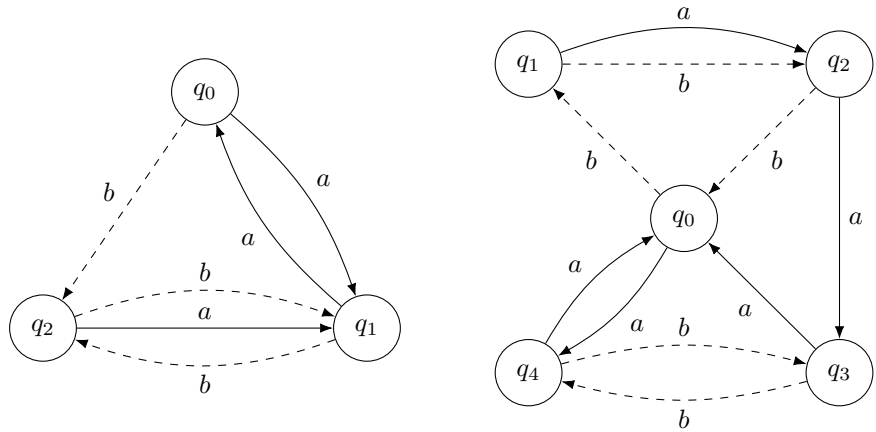
\begin{figure}[!htb]\centering\raisebox{0.1\totalheight}{
\begin{tikzpicture}[node distance=3cm,scale=1,every node/.style={transform shape},bend angle=30]
\node[state, inner sep=0pt] (q0) {$q_0$};
\node[state, inner sep=0pt] [below right=2.5cm and 1.5cm of q0] (q1) {$q_1$};
\node[state, inner sep=0pt] [below left=2.5cm and 1.5cm of q0] (q2) {$q_2$};
\draw (q0) edge[bend left=15] node[auto,midway]{$a$} (q1);
\draw (q1) edge[bend left=15] node[auto,midway]{$a$} (q0);
\draw (q2) edge node[auto,midway]{$a$} (q1);

\draw(q0) edge[dashed] node[auto,midway,swap]{$b$} (q2);
\draw(q1) edge[dashed,bend left=20] node[auto,midway]{$b$} (q2);
\draw(q2) edge[dashed,bend left=20] node[auto,midway]{$b$} (q1);
\end{tikzpicture}}
\hspace{1cm}
\begin{tikzpicture}[node distance=2cm,scale=1,every node/.style={transform shape},bend angle=30]
\node[state, inner sep=0pt] (q0) {$q_0$};
\node[state, inner sep=0pt] [above left=of q0] (q1) {$q_1$};
\node[state, inner sep=0pt] [above right=of q0] (q2) {$q_2$};
\node[state, inner sep=0pt] [below right=of q0] (q3) {$q_3$};
\node[state, inner sep=0pt] [below left=of q0] (q4) {$q_4$};

\draw (q1) edge[bend left=20] node[auto,midway]{$a$} (q2);
\draw (q2) edge node[auto,midway]{$a$} (q3);
\draw (q3) edge node[auto,midway]{$a$} (q0);
\draw (q0) edge[bend left=15] node[auto,midway]{$a$} (q4);
\draw (q4) edge[bend left=15] node[auto,midway]{$a$} (q0);

\draw (q1) edge[dashed] node[swap,midway]{$b$} (q2);
\draw (q2) edge[dashed] node[auto,midway]{$b$} (q0);
\draw (q0) edge[dashed] node[auto,midway]{$b$} (q1);
\draw (q3) edge[dashed,bend left=15] node[auto,midway]{$b$} (q4);
\draw (q4) edge[dashed,bend left=15] node[auto,midway]{$b$} (q3);
\end{tikzpicture}
\caption{Two non-primitive automata that are synchronizing. \textbf{Left:} Both letters have deficiency~$1$.
\textbf{Right:} Letter $a$ has deficiency~$1$ and letter $b$ is permutational.}
\label{fig:NonprimitiveSynchro}
\end{figure}

First, we note that the conjecture does not hold in the reverse direction.
There also exist almost-group automata that are synchronizing and non-primitive.

\begin{example}
Both automata from~\autoref{fig:NonprimitiveSynchro} are strongly connected, non-primitive, and synchronizing, though they have only permutational letters and letters of deficiency $1$.
The automaton on the right is almost-group.
\end{example}
\begin{proof}
The automata are synchronizing, since the words $ab$ and $a^2ba^3$ are reset respectively for the automaton on the left and on the right.
The automata are non-primitive, since the equivalence relation with classes $\{q_0,q_1\}$, $\{q_2\}$ is a non-trivial congruence for the automaton on the left and the equivalence relation with classes $\{q_0\}$, $\{q_1\}$, $\{q_2\}$, $\{q_3,q_4\}$ is a non-trivial congruence for the automaton on the right.
\end{proof}

\subsection{Known Cases}

By some results from the literature, we know that the conjecture holds in certain cases.

A \emph{sink} state $q_0 \in Q$ is such that $(q_0)M = \{q_0\}$.
An automaton is \emph{$0$-transitive} if it has a unique sink $q_0$ and $(q)M = Q$ for all $q \in Q \setminus \{q_0\}$;
in other words, there is a sink reachable from every state and the other states form one strongly connected component.

\begin{proposition}[{\cite[Proposition~5.1]{S2010TheoryOfTransformationMonoids}}]\label{prop:PrimitiveStronglyConnected}
For $n \ge 3$, a primitive automaton $(Q,\Sigma,\delta)$ is either strongly connected or $0$-transitive.
\end{proposition}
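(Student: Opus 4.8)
The plan is to derive everything from one elementary but very flexible observation: for each state $q \in Q$, the set $(q)M$ is invariant under the action of the transition monoid, because $((q)M)M = (q)M$ — every element of the left-hand side is $(q)(gf)$ for some $g,f \in M$, and $(q)M$ is contained in $((q)M)M$ since $M$ contains the identity. Consequently the equivalence relation $\rho_q$ whose blocks are the single block $(q)M$ together with a singleton $\{x\}$ for each $x \notin (q)M$ is a congruence of $(Q,M)$: the only thing to check is that $x,y \in (q)M$ implies $(x)f,(y)f \in (q)M$ for all $f \in M$, which is immediate from invariance. Since $M$ is a monoid we have $q \in (q)M$, so $\rho_q$ equals $\IdentityRel_Q$ exactly when $(q)M = \{q\}$ and equals $\TotalRel_Q$ exactly when $(q)M = Q$. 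Primitivity therefore forces, for every state $q$, the dichotomy that either $q$ is a sink or $(q)M = Q$.

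From here I would simply split into two cases. If the automaton has no sink, then $(q)M = Q$ for every $q$, which is precisely strong connectivity, and we are done. Otherwise fix a sink $q_0$. I would then argue it is the only one: if $q_1 \ne q_0$ were also a sink, then the equivalence relation whose unique non-singleton block is $\{q_0,q_1\}$ is a congruence, because $(q_0,q_1)f = ((q_0)f,(q_1)f) = (q_0,q_1)$ for every $f \in M$; it is non-trivial whenever $n \ge 3$ (its block has two elements, so it is not $\IdentityRel_Q$, and there is at least one further state, so it is not $\TotalRel_Q$), contradicting primitivity. Hence $q_0$ is the unique sink, and by the dichotomy of the previous paragraph every state $q \ne q_0$ satisfies $(q)M = Q$; that is exactly $0$-transitivity.

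I do not expect a genuine obstacle: the whole proof is the single idea of turning the invariant set $(q)M$ into a congruence and reading off what primitivity says about it, together with the routine bookkeeping of which relations are trivial. The only point where the hypothesis $n \ge 3$ enters is the uniqueness-of-sink step, and it is genuinely needed — a two-state automaton in which every letter fixes both states is primitive yet is neither strongly connected nor $0$-transitive.
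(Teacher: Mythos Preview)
Your proof is correct. The paper does not supply its own proof of this proposition; it is quoted verbatim from an external reference \cite{S2010TheoryOfTransformationMonoids} and used as a black box, so there is no in-paper argument to compare against. Your approach---turning the forward orbit $(q)M$ into a congruence whose only non-singleton block is $(q)M$, reading off the dichotomy ``sink or $(q)M=Q$'' from primitivity, and then handling uniqueness of the sink with a second hand-built congruence---is the standard elementary argument and goes through cleanly, including the observation that $n\ge 3$ is genuinely needed for the two-sinks congruence to be non-trivial.
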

It follows that it would be enough to show the conjecture for strongly connected automata, as $0$-transitive automata are synchronized in $q_0$.

An automaton is called \emph{aperiodic} if its transition monoid does not have any non-trivial subgroups; in other words, in there is no transformation that acts cyclically on some subset of states of size $\ge 2$.

From the following statement, we get that a primitive aperiodic automaton is synchronizing:
\begin{proposition}[{\cite[rephrased Lemma~6]{T2007Aperiodic}}]
An aperiodic automaton $(Q,\Sigma,\delta)$ is synchronizing if and only if it has a state $q \in Q$ that is reachable from all the states, i.e., $q \in (p)M$ for all $p \in Q$.
\end{proposition}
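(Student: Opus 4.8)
The forward direction is immediate: if $w$ is a reset word, then the unique state of $(Q)\delta(w)$ is reachable from every state, because $(p)\delta(w)$ takes that value for all $p\in Q$. For the converse, the plan is to look at a transformation of minimal rank $r$ in the transition monoid $M=\langle\{\delta(a)\mid a\in\Sigma\}\rangle$ and to prove that $r=1$; since $M$ then contains a map of rank~$1$, the automaton is synchronizing. Throughout I would use aperiodicity only through the following elementary consequence: in a finite semigroup with no non-trivial subgroup, for every $f$ there is $m\ge 1$ with $f^{m}=f^{m+1}$, so $f^{m}$ is idempotent (the cyclic subsemigroup generated by $f$ has trivial group part).

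First I would fix an idempotent $e\in M$ of rank $r$: apply the above to any transformation of minimal rank $r$; the resulting idempotent has rank $\le r$, hence exactly $r$. Being idempotent, $e$ fixes its image $S:=(Q)e$ pointwise, because every $x\in S$ is of the form $(y)e$ and then $(x)e=(y)e^{2}=(y)e=x$.

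The key step is the lemma: every $g\in M$ with $(S)g=S$ acts as the identity on $S$. Indeed, $g$ then maps the finite set $S$ onto itself, so $g|_{S}$ is a permutation of $S$. Consider $eg\in M$: it has rank $r$ with $(Q)(eg)=(S)g=S$, and since $e$ fixes $S$ pointwise one checks $(eg)|_{S}=g|_{S}$ and, inductively, $(eg)^{k}|_{S}=(g|_{S})^{k}$ for all $k\ge 1$. Aperiodicity applied to $eg$ gives $(eg)^{m}=(eg)^{m+1}$ for some $m$, hence $(g|_{S})^{m}=(g|_{S})^{m+1}$; as $g|_{S}$ is invertible in the symmetric group on $S$, this forces $g|_{S}=\mathrm{id}_{S}$.

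Finally, suppose for contradiction that $r\ge 2$ and pick any $s\in S$. By hypothesis there is a word $u$ with $(s)\delta(u)=q$. Put $g:=e\,\delta(u)\,e\in M$. Its image lies in $(Q)e=S$, so its rank is $\le|S|=r$, and by minimality it equals $r$; a short computation using that $e$ fixes $S$ pointwise shows $(S)g=(Q)g=S$. By the lemma $g|_{S}=\mathrm{id}_{S}$, so $(s)g=s$; on the other hand $(s)g=\bigl((s)\delta(u)\bigr)e=(q)e$ because $e$ fixes $s$. Hence $s=(q)e$ for every $s\in S$, so $|S|=1$, contradicting $r\ge 2$. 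Therefore $r=1$ and the automaton is synchronizing. I expect the crux to be this last paragraph: the idea of sandwiching a ``reaching word'' for $q$ between two copies of $e$, which simultaneously pins the image down to exactly $S$ and, via the lemma, annihilates the residual permutation of $S$ coming from the group part that aperiodicity forbids. The idempotent construction and the lemma itself are routine once aperiodicity is phrased as $f^{m}=f^{m+1}$.
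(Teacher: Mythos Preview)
The paper does not supply its own proof of this proposition; it is quoted verbatim (with attribution) from Trahtman's paper and used as a black box. So there is nothing in the present paper to compare against.

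That said, your argument is correct and is essentially the standard minimal-rank idempotent proof. The only step worth spelling out a bit more carefully is the claim $(S)g=S$ in the last paragraph: from $g=e\,\delta(u)\,e$ one has $(Q)g=((S)\delta(u))e\subseteq S$, and since $e$ fixes $S$ pointwise also $(S)g=((S)\delta(u))e=(Q)g$; minimality of $r$ then forces $|(Q)g|=r=|S|$, hence $(S)g=S$. With that made explicit, the lemma applies and the contradiction $s=(q)e$ for every $s\in S$ follows exactly as you wrote. The forward direction and the idempotent/lemma machinery are fine as stated.
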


Another case is automata with a prime number of states that also contains a full cyclic permutation in their transition monoid.
It follows by an old result of Pin \cite{Pin1978Circular}, stating that a \emph{circular} automaton, i.e., with a letter that induces one cycle on all the states, with a prime number of states is synchronizing whenever it has any non-permutational letter.

\begin{corollary}[{By~\cite{Pin1978Circular}}]
An automaton with a prime number of states $n$ that also contains a full cyclic permutation in its transition monoid is primitive and it is synchronizing if and only if it has any non-permutational letter.
\end{corollary}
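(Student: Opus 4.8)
The plan is to prove the two assertions separately: (i) the automaton is primitive, and (ii) it is synchronizing iff it has a non-permutational letter.

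\smallskip
\textbf{Primitivity.} I would identify the state set $Q$ with $\mathbb{Z}/n\mathbb{Z}$ so that the assumed full cyclic permutation $c$, which lies in the transition monoid $(Q,M)$, acts as $x \mapsto x+1$. First note that $c$ alone already makes $(Q,M)$ transitive: if $w$ is a word with $\delta(w)=c$, then every power $c^k = \delta(w^k)$ belongs to $M$, and these powers act transitively on $Q$. Now let $\rho$ be any congruence of $(Q,M)$. Since $c$ and $c^{-1} = c^{\,n-1}$ both lie in $M$ and $(\rho)M = \rho$, the bijection $c$ maps $\rho$-classes onto $\rho$-classes; hence the $\rho$-classes form a block system of the transitive cyclic group $\langle c\rangle$ of prime degree $n$. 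All blocks then have the same cardinality, which divides $n$ and so equals $1$ or $n$; in the first case $\rho = \IdentityRel_Q$ and in the second $\rho = \TotalRel_Q$. Thus $\rho$ is trivial, and the automaton is primitive — unconditionally, regardless of whether it has a non-permutational letter.

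\smallskip
\textbf{Synchronizability.} The ``only if'' direction is immediate by contraposition: if every letter is permutational, then $M$ is a group, so every element of $M$ is a bijection, and since $n \ge 2$ no element has rank $1$; hence the automaton is not synchronizing. For the ``if'' direction I would reduce to Pin's theorem~\cite{Pin1978Circular}. The subtlety is that Pin assumes a \emph{letter} inducing a full cycle, whereas we only assume that such a permutation lies in the transition monoid. To bridge this, pick a word $w$ with $\delta(w)$ equal to the full cyclic permutation and form $\mathrsfs{A}' = (Q, \Sigma \cup \{c\}, \delta')$, where $c$ is a fresh letter with $\delta'(c) := \delta(w)$ and $\delta'$ agrees with $\delta$ on $\Sigma$. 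Adding a letter whose induced transformation already belongs to $M$ does not change the transition monoid, so $\mathrsfs{A}'$ has the same transition monoid as $\mathrsfs{A}$ and, in particular, $\mathrsfs{A}'$ is synchronizing iff $\mathrsfs{A}$ is. But $\mathrsfs{A}'$ is circular, has a prime number $n$ of states, and has a non-permutational letter (the one witnessing that $\mathrsfs{A}$ has one); so by~\cite{Pin1978Circular} the automaton $\mathrsfs{A}'$, and hence $\mathrsfs{A}$, is synchronizing.

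\smallskip
I do not expect a genuine obstacle here; the one point requiring care is precisely this last matching of hypotheses, since ``contains a full cyclic permutation in its transition monoid'' is strictly weaker than ``has a letter inducing a full cycle'', and one must verify that passing to $\mathrsfs{A}'$ preserves both synchronizability and the prime-degree circular structure needed to apply~\cite{Pin1978Circular}.
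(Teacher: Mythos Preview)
Your proof is correct and follows the same two-step approach as the paper: primitivity via the prime-degree transitive group generated by the cycle, and synchronizability via Pin's circular-automaton theorem. You are in fact more careful than the paper on two points---you give a self-contained block-size argument rather than citing the general fact that transitive groups of prime degree are primitive, and you explicitly bridge the gap between ``full cycle in the transition monoid'' and ``full cycle as a letter'' by adjoining a redundant letter, whereas the paper simply invokes~\cite{Pin1978Circular} directly.
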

\begin{proof}
The permutation group contained in the automaton's transition monoid is transitive, and it is known that every transitive group of a prime degree is primitive \cite{ACS2017SynchronizationAndItsFriends}.
The statement that the automaton is synchronizing if it contains any non-permutational letter has been proved in~\cite{Pin1978Circular}.
\end{proof}

\subsection{Strong Variant}

\begin{figure}[!htb]\centering\raisebox{0.05\totalheight}{
\begin{tikzpicture}[node distance=2cm,scale=1,every node/.style={transform shape},bend angle=30]
\node[state, inner sep=0pt] (q0) {$q_0$};
\node[state, inner sep=0pt] [right=of q0] (q1) {$q_1$};
\node[state, inner sep=0pt] [right=of q1] (q2) {$q_2$};
\node[state, inner sep=0pt] [below=of q2] (q3) {$q_3$};
\node[state, inner sep=0pt] [left=of q3] (q4) {$q_4$};

\draw (q0) edge[bend left=20] node[auto,midway]{$a$} (q1);
\draw (q1) edge[bend left=20] node[auto,midway]{$a$} (q0);
\draw (q4) edge node[auto,midway]{$a$} (q0);
\draw (q2) edge[bend left] node[auto,midway]{$a$} (q3);
\draw (q3) edge[loop, looseness=7,out=250,in=290] node[swap,midway]{$a$} (q3);

\draw(q0) edge[dashed,loop below,looseness=7,in=250,out=290] node[auto,midway]{$b$} (q0);
\draw(q1) edge[dashed] node[auto,midway]{$b$} (q2);
\draw(q2) edge[dashed] node[auto,midway,swap]{$b$} (q3);
\draw(q3) edge[dashed] node[auto,midway]{$b$} (q4);
\draw(q4) edge[dashed] node[auto,midway]{$b$} (q1);
\end{tikzpicture}}
\hspace{0.4cm}
\begin{tikzpicture}[node distance=2cm,scale=1,every node/.style={transform shape},bend angle=10]
\node[state, inner sep=0pt] (q0) {$q_0$};
\node[state, inner sep=0pt] [above left=of q0] (q1) {$q_1$};
\node[state, inner sep=0pt] [above right=of q0] (q2) {$q_2$};
\node[state, inner sep=0pt] [below right=of q0] (q3) {$q_3$};
\node[state, inner sep=0pt] [below left=of q0] (q4) {$q_4$};
\draw(q0) edge node[auto,midway]{$a$} (q3);
\draw(q2) edge node[auto,midway]{$a$} (q3);
\draw(q3) edge node[auto,midway]{$a$} (q4);
\draw(q4) edge node[auto,midway]{$a$} (q0);
\draw(q1) edge node[auto,midway]{$a$} (q0);

\draw(q1) edge[dashed,bend left] node[auto,midway]{$b$} (q4);
\draw(q4) edge[dashed,bend left] node[auto,midway]{$b$} (q1);
\draw(q0) edge[dashed,bend left] node[auto,midway]{$b$} (q2);
\draw(q2) edge[dashed,bend left] node[auto,midway]{$b$} (q0);
\draw(q3) edge[dashed,loop,out=-20,in=20,looseness=6] node[midway,swap]{$b$} (q3);

\draw(q1) edge[densely dotted,bend left] node[auto,midway]{$c$} (q2);
\draw(q2) edge[densely dotted,bend left] node[auto,midway]{$c$} (q1);
\draw(q0) edge[densely dotted,loop,out=250,in=290,looseness=6] node[midway,swap]{$c$} (q0);
\draw(q3) edge[densely dotted,loop,out=250,in=290,looseness=6] node[midway,swap]{$c$} (q3);
\draw(q4) edge[densely dotted,loop,out=250,in=290,looseness=6] node[midway,swap]{$c$} (q4);
\end{tikzpicture}
\caption{Two primitive automata that are not synchronizing. Letter $a$ has deficiency $2$, and letters $b$ and $c$ are permutational.}\label{fig:PrimitiveNonSynchro}
\end{figure}
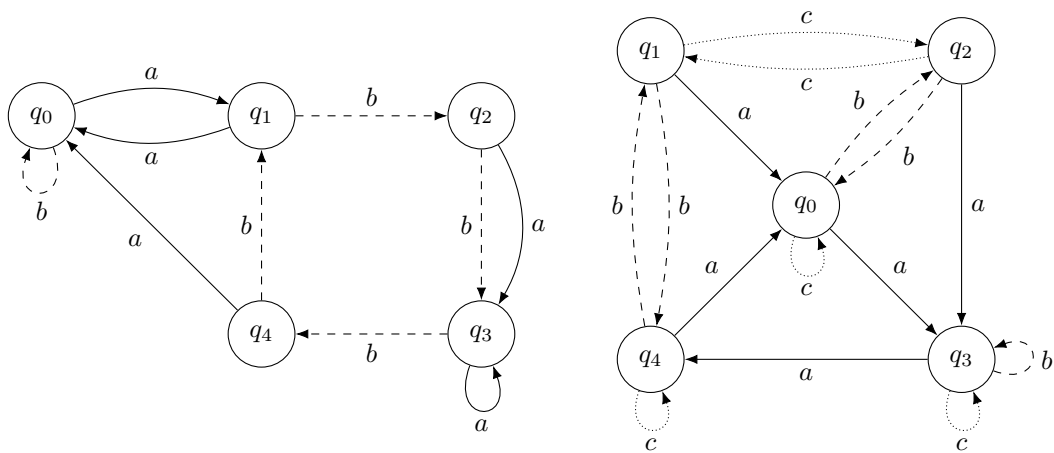

We note that restricting deficiency is essential.
Examples of primitive non-synchronizing automata can be easily found already if we admit letters of deficiency $2$.

\begin{example}\label{ex:PrimitiveNonSynchro}
Both automata from~\autoref{fig:PrimitiveNonSynchro} are primitive and non-synchronizing.
\end{example}
\begin{proof}
Consider the automaton on the left.
Observe that every pair of states can be mapped to $\{q_0,q_1\}$.
If there is a non-trivial congruence on the set of states, so some two states $q_i$, $q_j$ are in one block, then so $q_0$, $q_1$ are in one block.
But then also $q_0$ and each $q_i$ for $i=1,\ldots,4$ are in one block by the action of $b$, thus the congruence is trivial.
Also, the automaton is non-synchronizing, because the pairs $\{q_0,q_i\}$ for $i=1,\ldots,4$, $\{q_1,q_3\}$, and $\{q_2,q_4\}$ are not compressible.

Consider the automaton on the right.
Observe that every pair of states can be mapped to $\{q_0,q_3\}$, and this pair can be mapped to each $\{q_i,q_3\}$ for $i \in \{0,1,2,4\}$.
Thus, as before, the automaton is primitive.
Also, the pairs $\{q_i,q_3\}$ for $i \in \{0,1,2,4\}$, $\{q_0,q_4\}$, and $\{q_1,q_2\}$ are not compressible, thus the automaton is non-synchronizing.
\end{proof}

However, we can still slightly relax the condition and propose a stronger conjecture.
For this, we refer to the notion of components and cycles of a transformation, which are defined on the digraph $D(t)$ in which from every state $q$ there is exactly one outgoing arc $(q,(q)t)$.
Then, a \emph{component} of a transformation $t$ is a weakly connected component of $D(t)$, which always contains exactly one cycle, which can be a loop.
The strongly connected components of $D(t)$ are precisely its cycles.

\begin{conjecture}[Strong variant]\label{con:strong}
Every primitive automaton where each letter:
\begin{enumerate}
\item is permutational,
\item has deficiency $1$, or
\item its transformation has one component with a cycle of size $\le 2$ and the other components being cycles,
\end{enumerate}
is synchronizing, unless all letters are permutational.
\end{conjecture}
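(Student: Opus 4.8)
The plan is to argue by a minimal counterexample and to recast the whole problem in the language of graph endomorphisms, in the spirit of the Cameron--Kazanidis synchronization graph. Let $\mathrsfs{A}=(Q,\Sigma,\delta)$ be a counterexample with $|Q|=n$ minimal: primitive, not all letters permutational, yet not synchronizing. By \autoref{prop:PrimitiveStronglyConnected} we may assume $\mathrsfs{A}$ is strongly connected and $n\ge 3$. Write $(Q,M)$ for the transition monoid and let $\Gamma$ be the graph on $Q$ whose edges are the \emph{non-compressible} pairs. The standard observations are that every $f\in M$ sends edges of $\Gamma$ to edges (a compression of the image pair would, composed with $f$, compress the original), so $M$ acts by graph endomorphisms; that a letter is non-permutational exactly when it induces a non-injective endomorphism, which must collapse a non-edge, so $\Gamma$ is not complete; and that $\mathrsfs{A}$ is non-synchronizing exactly when $\Gamma$ has an edge. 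Writing $r\ge 2$ for the minimum rank attained in $M$, every minimal image is an $r$-clique and the fibres of any minimal-rank word form a proper $r$-colouring, whence $\omega(\Gamma)=\chi(\Gamma)=r$. Thus a counterexample is equivalent to an $M$-invariant graph $\Gamma$ that is neither complete nor edgeless and has $\omega(\Gamma)=\chi(\Gamma)$.

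Next I would extract \emph{concentrated folds} from the special letters. A deficiency-$1$ letter $a$ induces an endomorphism that identifies a single pair $\{p,p'\}$ (a non-edge) and is injective elsewhere. For a type-$3$ letter $a$, I pass to the idempotent power $e=\delta(a)^{\omega}\in M$: its image $R=(Q)e$ is the union of all cycles of $\delta(a)$, it is the identity on $R$, and its non-singleton fibres attach the tree-nodes of the single special component to the one cycle of size $\le 2$. Hence $e$ is a retraction of $\Gamma$ onto the subgraph induced on $R$, and each of its fibres is an independent set concentrated around at most two sinks. The role of conditions~(2) and~(3) is precisely that every non-permutational generator produces identifications that are \emph{concentrated}: a single merged pair, or merges into one block of size $\le 2$. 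This is exactly the feature that the automata of \autoref{fig:PrimitiveNonSynchro} lack, their letter $a$ having two contracting components of deficiency $2$; so any correct proof must use the ``one contracting component, cycle $\le 2$'' hypothesis in an essential way.

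The proof then reduces to a single clean target: show that these concentrated folds force $\Gamma$ to be a disjoint union of cliques. Indeed, if $\Gamma$ is an equivalence graph, then ``lying in the same clique'' is $M$-invariant (an endomorphism maps any two clique-mates to adjacent or equal vertices, hence into a common clique), so it is a congruence; since $\Gamma$ is neither complete nor edgeless, this congruence is neither $\TotalRel_Q$ nor $\IdentityRel_Q$, contradicting primitivity and finishing the argument. The hard part --- and the genuine obstacle --- is the clique-structure lemma itself. The endomorphism property already imposes local constraints (for a deficiency-$1$ fold identifying $p'$ with $p$, every edge $\{p',y\}$ must map to an edge $\{p,y\}$), and the type-$3$ retraction $e$ imposes that the subgraph on $R$ controls all of $\Gamma$; the plan is to propagate these constraints across $Q$ using strong connectivity and to show that a size-$\le 2$ contracting cycle prevents any chain of folds from joining two vertices of a common $r$-clique, so that $\Gamma$ cannot contain an induced path or odd hole and must split into cliques. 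I expect the real difficulty to be making this propagation uniform over arbitrarily large deficiency of the type-$3$ letters and valid even when the permutation-group part is non-transitive or trivial, which is exactly the regime in which the group-theoretic results \cite{ABCRS2016PrimitiveGroups} do not apply.

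Finally, I would organise the two letter-types asymmetrically. The deficiency-$1$ contributions can be handled by the techniques behind the weak variant, now available through \cite{V2023SynchronizationOfPrimitiveAutomata}, so that the genuinely new content is the incorporation of high-deficiency type-$3$ letters via their idempotent retractions. I would keep the minimal-counterexample framing in reserve for an inductive alternative: any non-trivial congruence produced as above not only contradicts primitivity but would also yield a strictly smaller primitive quotient automaton of the same type, to which the induction hypothesis applies. Either way, the crux is the clique-preservation step of the previous paragraph, and that is where I expect essentially all of the work to lie.
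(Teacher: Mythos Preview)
The statement you are addressing is \autoref{con:strong}, which in the paper is a \emph{conjecture}, not a theorem: the paper offers no proof, only experimental verification for small parameters (\autoref{tab:experiments}) and a proof of the much narrower PSc case via \autoref{thm:U-Synchronizing} and \autoref{lem:ScToU}. There is therefore no ``paper's own proof'' to compare your attempt against; as far as the paper records, the strong variant is open.

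Your proposal is, by your own description, a plan rather than a proof. The synchronization-graph setup is correct and standard, and the reduction ``if $\Gamma$ is a disjoint union of cliques then primitivity fails'' is valid. But the entire weight rests on the unproved clique-structure lemma, for which you give no argument beyond an intention to ``propagate constraints''. There is no a~priori reason a non-synchronizing primitive automaton with letters of types~(1)--(3) must have $\Gamma$ equal to a disjoint union of cliques; the synchronization graph of a transitive non-synchronizing monoid can be far from an equivalence graph even when $\omega(\Gamma)=\chi(\Gamma)$, and nothing in your ``concentrated fold'' analysis forces transitivity of adjacency. For contrast, the paper's proof in the PU subclass does not touch $\Gamma$ at all: it works with the relation $\pi$ of~\eqref{eq:pi}, proves $\pi$ is cyclic (\autoref{cor:piCyclic}), and extracts a congruence from $\pi^*=\epsilon[\pi]$. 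That argument depends essentially on each non-permutational letter contributing a \emph{single} arc $(s,r)$, and it does not extend in any obvious way to type-(3) letters of large deficiency, whose idempotent powers contribute many arcs into one or two sinks. So neither route currently settles \autoref{con:strong}, and your proposal, as it stands, has a genuine gap precisely at the step you yourself identify as the crux.
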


Both automata from~\autoref{fig:NonprimitiveSynchro} satisfy the conditions of~\autoref{con:strong} except that they are non-primitive; in particular, their letter $a$ meets condition~(3).
On the other hand, the examples from~\autoref{fig:PrimitiveNonSynchro} show that we cannot strengthen the conjecture more:
These automata do not satisfy the conditions due to the transformation of the letter $a$; in the left automaton, it has two components that are not cycles, and in the right automaton, the cycle in the unique component has size $3$.

\subsection{Experimental Support}

\begin{table}[!htb]\centering
$\begin{array}{|c|rrrrrrrr|}\hline
\text{Alphabet size}&\multicolumn{8}{c|}{\text{Number of states }n}\\
|\Sigma|               &    \le 4 &        5 & 6        & 7        & 8        & 9        & 10       & 11        \\\hline
2                      &\checkmark&\checkmark&\checkmark&\checkmark&\checkmark&\checkmark&\checkmark&\checkmark^*\\
3                      &\checkmark&\checkmark&\checkmark&\checkmark&          &          &          &           \\
4                      &\checkmark&\checkmark&\checkmark&          &          &          &          &           \\
5                      &\checkmark&\checkmark&          &          &          &          &          &           \\
6                      &\checkmark&\checkmark&          &          &          &          &          &           \\
\ge 7                  &\checkmark&          &          &          &          &          &          &           \\\hline
\end{array}$\\
\smallskip
$\checkmark$ -- Both \autoref{con:weak} and \autoref{con:strong} verified.\\
$\checkmark^*$ -- Only \autoref{con:weak} verified.
\smallskip
\caption{The range where our conjectures have been verified.}
\label{tab:experiments}
\end{table}

By the automata enumeration method of~\cite{KS2013GeneratingSmallAutomata,KKS2016ExperimentsWithSynchronizingAutomata}, we have verified both conjectures for automata with a small number of states and a small alphabet.
\autoref{tab:experiments} summarizes the cases.
The computation of the binary case with $n=11$ (only the weak variant) took about $81$ hours of one CPU core.

We also tried further strengthening of~\autoref{con:strong} and then it was easy to find many counterexamples already for $n \le 5$ (for instance, as in~\autoref{fig:PrimitiveNonSynchro}).

\section{Primitive Automata with Semiconstant Letters}

A map $f$ is \emph{semiconstant} if there is a subset $S \subseteq Q$ such that $(S)f = \{r\}$, for some state $r \in Q$, and $(q)f = q$ for every $q \in Q \setminus S$.
We denote it by $(S \to r)$.
The special cases of a semiconstant map are:
\begin{enumerate}
\item the \emph{identity} map, which can be denoted by $(\emptyset \to r)$ for any $r \in Q$;
\item \emph{constant} maps, which can be denoted by $(Q \to r)$ for $r \in Q$;
\item \emph{unitary} (or \emph{simple idempotent}) maps, denoted by $(\{q\} \to r)$, where $q \neq r$. We denote it simpler by $(q \to r)$.
\end{enumerate}
We transfer this terminology to letters that induce these transformations.

An automaton $\mathrsfs{A} = (Q,\Sigma,\delta)$ where every letter is permutational or semiconstant is a \emph{PSc-automaton}.
If additionally, every letter of the automaton is permutational or unitary, then it is a \emph{PU-automaton}.
In the literature, the latter was often called an \emph{automaton with simple idempotents} \cite{H2022ConstrainedSynchronization,R2000EstimationSimpleIdempotents}, whereas the names \emph{unitary} and \emph{semiconstant} appear in~\cite{BS2015LargeAperiodic}.
The later terminology is adopted as it offers and distinguish both types of letters that we consider here.

In this section, we prove that every primitive PSc-automaton is synchronizing.
We first show this for PU-automata and then generalize to the wider class.

\subsection{Closures of Relations}

Before we consider PU-automata, we state some basic properties of relations that will be used.

Let $(Q,M)$ be a transformation monoid.
It is easy to see that the union and the intersection of invariant relations are also invariant (for $(Q,M)$).
Since the identity relation $\IdentityRel_Q = \{(q,q) \mid q \in Q\}$ and the inverse relation $\rho^{-1} = \{(q,p) \mid (p,q) \in \rho\}$ of an invariant $\rho$ relation are also invariant, we get the following statement:

\begin{proposition}\label{prop:SymmetricClosure}
If $\rho$ is an invariant relation for $(Q,M)$, then its symmetric closure $\IdentityRel_Q \cup \rho \cup \rho^{-1}$ is also invariant for $(Q,M)$.
\end{proposition}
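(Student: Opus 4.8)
The plan is to verify directly that each of the three pieces of the symmetric closure is invariant, and then invoke the fact (stated just above) that a union of invariant relations is invariant. Concretely, I would argue as follows. First, $\IdentityRel_Q$ is invariant: for any $f \in M$ and any $q \in Q$ we have $(q,q)f = ((q)f,(q)f) \in \IdentityRel_Q$, so $(\IdentityRel_Q)M \subseteq \IdentityRel_Q$; and since $M$ contains the identity map (it is a monoid), $\IdentityRel_Q \subseteq (\IdentityRel_Q)M$, giving equality. Second, if $\rho$ is invariant then so is $\rho^{-1}$: from $(p,q) \in \rho$ and $f \in M$ we get $((p)f,(q)f) \in \rho$, hence $((q)f,(p)f) \in \rho^{-1}$, which shows $(\rho^{-1})M \subseteq \rho^{-1}$; applying the same observation to $\rho^{-1}$ in place of $\rho$ (using $(\rho^{-1})^{-1} = \rho$ and the containment just proved for $\rho$) yields the reverse inclusion, so $(\rho^{-1})M = \rho^{-1}$. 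Third, $\rho$ itself is invariant by hypothesis.

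Having established that $\IdentityRel_Q$, $\rho$, and $\rho^{-1}$ are each invariant for $(Q,M)$, I would then simply apply the remark preceding the proposition, namely that the union of invariant relations is invariant. This gives that $\IdentityRel_Q \cup \rho \cup \rho^{-1}$ is invariant, which is exactly the claim. Finally I would note that $\IdentityRel_Q \cup \rho \cup \rho^{-1}$ is indeed the symmetric closure: it is symmetric (it is stable under swapping coordinates, since $\IdentityRel_Q$ and the pair $\{\rho, \rho^{-1}\}$ are), it contains $\rho$, and any symmetric relation containing $\rho$ must contain $\rho^{-1}$ and the diagonal is harmless to add, so it is the smallest such; this last observation is really just a sanity check on terminology rather than part of the invariance argument.

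There is essentially no main obstacle here — the statement is elementary and the proof is a short chain of routine verifications. If anything, the only point requiring a moment's care is the direction $\rho^{-1} \subseteq (\rho^{-1})M$ (and likewise the diagonal case $\IdentityRel_Q \subseteq (\IdentityRel_Q)M$), where one must remember that $M$ is a \emph{monoid} and therefore contains the identity transformation, so that every relation $\sigma$ satisfies $\sigma \subseteq (\sigma)M$ trivially; combined with the forward inclusions this upgrades each "$\subseteq$" to "$=$". I would keep the write-up to two or three sentences, citing the preceding remarks about unions, inverses, and the diagonal being invariant, rather than re-deriving them.
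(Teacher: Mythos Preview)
Your proposal is correct and follows exactly the paper's approach: the paper states (immediately before the proposition) that unions of invariant relations are invariant and that $\IdentityRel_Q$ and $\rho^{-1}$ are invariant whenever $\rho$ is, and then records the proposition as an immediate consequence. Your write-up simply spells out these preliminary observations in slightly more detail than the paper does.
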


For a relation $\rho$, we assign the oriented graph $\Gamma(\rho) = (Q,\rho)$, called shortly the \emph{orgraph} of $\rho$.
For each pair of different states $(s,t) \in \rho$, there is an \emph{arc} (directed edge) in the orgraph from $s$ to $t$, and for $(s,s) \in \rho$, there is a loop.
When considering orgraphs, we use the usual notions from graph theory such as a path, a simple path, a route (walk), and a cycle.

Denote by $\rho^*$ the transitive closure of $\rho$ -- this is the reachability relation of the orgraph $\Gamma(\rho)$, i.e., $(s,t) \in \rho^*$ if and only if there is a (possibly empty) path from $s$ to $t$ in $\Gamma(\rho)$.

\begin{proposition}\label{prop:TransitiveClosure}
If $\rho$ is an invariant relation for $(Q,M)$, then also $\rho^*$ is invariant.
\end{proposition}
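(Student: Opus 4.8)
The plan is to reduce invariance of $\rho^*$ to the observation that each $f \in M$ sends walks in the orgraph $\Gamma(\rho)$ to walks in $\Gamma(\rho)$. First I would record the elementary consequence of invariance that for every $f \in M$ we have $(\rho)f \subseteq \rho$: indeed $(\rho)f \subseteq (\rho)M = \rho$. Phrased in terms of the orgraph, for every arc or loop $(s,t)$ of $\Gamma(\rho)$ the pair $((s)f,(t)f)$ is again an arc or loop of $\Gamma(\rho)$ (note that although $f$ need not be injective, invariance still forces an arc to be sent to an arc or a loop).

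Next, I would write $\rho^* = \bigcup_{k \ge 0} \rho^k$, where $\rho^0 = \IdentityRel_Q$ and $\rho^{k+1}$ is the relational composition of $\rho^k$ with $\rho$, so that $(s,t) \in \rho^k$ exactly when there is a path of length $k$ from $s$ to $t$ in $\Gamma(\rho)$. The key claim, proved by induction on $k$, is that $(\rho^k)f \subseteq \rho^k$ for all $f \in M$ and all $k \ge 0$. The base case $k = 0$ is immediate since $((q)f,(q)f) \in \IdentityRel_Q$. For the inductive step, if $(s,t) \in \rho^{k+1}$ then there is some $u \in Q$ with $(s,u) \in \rho^k$ and $(u,t) \in \rho$; applying $f$ componentwise and using the induction hypothesis together with $(\rho)f \subseteq \rho$ gives $((s)f,(u)f) \in \rho^k$ and $((u)f,(t)f) \in \rho$, hence $((s)f,(t)f) \in \rho^{k+1}$.

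Taking the union over all $k$ yields $(\rho^*)f \subseteq \rho^*$ for every $f \in M$, and therefore $(\rho^*)M \subseteq \rho^*$. For the reverse inclusion it suffices to note that $M$, being a transformation monoid, contains the identity map, so $\rho^* \subseteq (\rho^*)M$. Combining the two inclusions gives $(\rho^*)M = \rho^*$, i.e., $\rho^*$ is invariant for $(Q,M)$.

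I do not expect any real obstacle: the argument is a routine induction, essentially the statement that $f$ induces a graph homomorphism from $\Gamma(\rho)$ to itself and hence maps each path to a walk, which in turn witnesses membership in the reachability relation $\rho^*$. The only point requiring a little care is to make the arc-to-arc-or-loop step explicit, since $f$ may collapse states and thereby shorten or close up a path.
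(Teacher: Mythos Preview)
Your proof is correct and follows essentially the same approach as the paper: both arguments use that each $f \in M$ sends a path in $\Gamma(\rho)$ to a walk in $\Gamma(\rho)$, the paper stating this directly for a chosen path while you package it as an induction on the power $\rho^k$. Your version is slightly more explicit in checking the reverse inclusion $\rho^* \subseteq (\rho^*)M$ via the identity of $M$, which the paper leaves implicit.
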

\begin{proof}
Let $(s,t) \in \rho^*$ and let $P=(s=s_1,\ldots,s_k=t)$ be a path in the orgraph $\Gamma(\rho)$ such that $(s_i,s_{i+1}) \in \rho$ for all $1 \le i < k$.
Consider a map $f \in M$ and the image $(P)f = ((s_1)f,\ldots,(s_k)f)$.
Then for all $1 \le i < k$, $((s_i)f,(s_{i+1})f) \in \rho$ since $\rho$ is invariant.
Thus, $(P)f$ is a route in $\Gamma(\rho)$, which implies that $((s)f,(t)f) \in \rho^*$.
\end{proof}

Denote by $\epsilon[\rho]$ the equivalence closure of $\rho$, i.e., the smallest equivalence relation containing $\rho$.
This can be obtained by taking the reflexive, symmetric, and transitive closures:
\[ \epsilon[\rho] = (\IdentityRel_Q\,\cup\,\rho\,\cup\,\rho^{-1})^* .\]
The equivalence classes of $\epsilon[\rho]$ are the weakly connected components of the orgraph $\Gamma(\rho)$.

From Propositions~\ref{prop:SymmetricClosure} and~\ref{prop:TransitiveClosure}, we get:
\begin{corollary}\label{cor:EquivalenceClosure}
If $\rho$ is an invariant relation of $(Q,M)$, then its equivalence closure $\epsilon[\rho]$ is a congruence of $(Q,M)$.
\end{corollary}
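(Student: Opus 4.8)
The plan is to chain the two closure statements just proved. Write $\sigma = \IdentityRel_Q \cup \rho \cup \rho^{-1}$ for the reflexive–symmetric closure of $\rho$. First I would invoke Proposition~\ref{prop:SymmetricClosure} to conclude that $\sigma$ is invariant for $(Q,M)$; this is exactly where one uses that $\IdentityRel_Q$ and $\rho^{-1}$ are invariant, together with the (already noted) fact that a union of invariant relations is invariant. Then I would apply Proposition~\ref{prop:TransitiveClosure} with $\sigma$ in place of $\rho$ to obtain that $\sigma^*$ is invariant as well. Since, by the displayed identity, $\epsilon[\rho] = (\IdentityRel_Q \cup \rho \cup \rho^{-1})^* = \sigma^*$, this already shows that $\epsilon[\rho]$ is invariant.

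It then remains only to verify that $\epsilon[\rho]$ is an equivalence relation, so that ``invariant'' upgrades to ``congruence''. For this I would observe that $\sigma$ is reflexive (it contains $\IdentityRel_Q$) and symmetric (it contains $\rho^{-1}$ alongside $\rho$), and that the transitive closure of a reflexive and symmetric relation is again reflexive and symmetric: the empty path in $\Gamma(\sigma)$ witnesses reflexivity, and reversing any path in $\Gamma(\sigma)$ yields a path because $\sigma$ is symmetric. Hence $\sigma^* = \epsilon[\rho]$ is reflexive, symmetric, and transitive, i.e., an equivalence relation, and being invariant it is a congruence of $(Q,M)$.

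There is essentially no obstacle here; the corollary is a purely formal consequence of the two preceding propositions. The only point that merits a sentence of care is the identification $\epsilon[\rho] = \sigma^*$: one should note that performing the reflexive and symmetric closures first and the transitive closure last genuinely produces the smallest equivalence relation containing $\rho$, without needing a further round of symmetrization. This is justified precisely by the preservation property just used — transitive closure keeps reflexivity and symmetry — so a single pass suffices.
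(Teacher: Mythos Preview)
Your argument is correct and matches the paper's own approach exactly: the corollary is deduced directly from Propositions~\ref{prop:SymmetricClosure} and~\ref{prop:TransitiveClosure}, with the equivalence-relation property of $\epsilon[\rho]$ being immediate from its definition. Your extra sentence verifying that transitive closure preserves reflexivity and symmetry is fine additional detail but not strictly needed, since $\epsilon[\rho]$ is defined as the smallest equivalence relation containing $\rho$.
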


\begin{proposition}
\label{prop:EquivalenceClosureOfTransitive}
For a binary relation $\rho$, we have: $\epsilon[\rho] = \epsilon[\rho^*]$.
\end{proposition}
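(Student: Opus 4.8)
The plan is to prove the two inclusions $\epsilon[\rho] \subseteq \epsilon[\rho^*]$ and $\epsilon[\rho^*] \subseteq \epsilon[\rho]$ separately, using only the definitions of the various closures and the fact that $\epsilon[\cdot]$ is monotone (a larger relation has a larger equivalence closure) and idempotent on equivalence relations.

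\textbf{The easy inclusion.} Since $\rho \subseteq \rho^*$ always holds (an arc is a length-one path), monotonicity of the equivalence closure gives $\epsilon[\rho] \subseteq \epsilon[\rho^*]$ immediately. There is nothing more to do here.

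\textbf{The other inclusion.} For $\epsilon[\rho^*] \subseteq \epsilon[\rho]$, the key observation is that $\epsilon[\rho]$ is an equivalence relation containing $\rho$, hence, being transitively closed, it contains the transitive closure $\rho^*$ as well: every path in $\Gamma(\rho)$ witnesses, by repeated transitivity inside $\epsilon[\rho]$, that its endpoints are $\epsilon[\rho]$-related. Now $\epsilon[\rho^*]$ is by definition the \emph{smallest} equivalence relation containing $\rho^*$, and $\epsilon[\rho]$ is an equivalence relation containing $\rho^*$, so $\epsilon[\rho^*] \subseteq \epsilon[\rho]$. This finishes the proof.

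\textbf{Where the work (such as it is) sits.} This statement is essentially a formal manipulation of closure operators, so there is no real obstacle; the only point requiring a line of justification is that $\rho \subseteq \epsilon[\rho]$ together with transitivity of $\epsilon[\rho]$ forces $\rho^* \subseteq \epsilon[\rho]$, which is a routine induction on path length. One could alternatively observe directly from the explicit formula $\epsilon[\rho] = (\IdentityRel_Q \cup \rho \cup \rho^{-1})^*$ that replacing $\rho$ by $\rho^*$ inside does not change the outcome, since $(\IdentityRel_Q \cup \rho^* \cup (\rho^*)^{-1})^* = (\IdentityRel_Q \cup \rho \cup \rho^{-1})^*$: both sides are the reachability relation of the same underlying undirected reachability structure, namely the weakly connected components of $\Gamma(\rho)$. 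Either route is short; I would present the two-inclusion argument as it is the cleanest.
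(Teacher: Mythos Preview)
Your proof is correct and follows essentially the same route as the paper: both directions are obtained from $\rho \subseteq \rho^*$ (giving $\epsilon[\rho]\subseteq\epsilon[\rho^*]$ by monotonicity) and $\rho^* \subseteq \epsilon[\rho]$ (giving $\epsilon[\rho^*]\subseteq\epsilon[\rho]$ by minimality). The paper justifies the latter inclusion in one line via the explicit formula $\epsilon[\rho]=(\IdentityRel_Q\cup\rho\cup\rho^{-1})^*$, which is exactly the alternative you mention at the end.
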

\begin{proof}
The direction $\subseteq$ follows from the trivial inclusion $\rho \subseteq \rho^*$, and the other direction $\supseteq$ follows from $\rho^* \subseteq \epsilon[\rho] = (\IdentityRel_Q \cup \rho \cup \rho^{-1})^*$ by definition.
\end{proof}

The following will be a useful property of relations:

\begin{definition}
A binary relation $\rho$ is \emph{cyclic} if its transitive closure is symmetric: $\rho^* = (\rho^*)^{-1}$.
\end{definition}

\begin{proposition}\label{prop:cyclic}
A binary relation $\rho$ is cyclic if and only if one of the following equivalent conditions are satisfied:
\begin{enumerate}
\item $\rho^{-1} \subseteq \rho^*$.
\item $\Gamma(\rho)$ is a union of (not necessarily disjoint) cycles.
\item Every weakly connected component of $\Gamma(\rho)$ is strongly connected.
\end{enumerate}
\end{proposition}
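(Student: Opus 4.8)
The plan is to prove the three conditions equivalent to cyclicity via the cycle of implications: cyclicity $\Leftrightarrow$ (1), then $(1) \Rightarrow (3) \Rightarrow (2) \Rightarrow (1)$. The single algebraic ingredient I would isolate first is that taking the transitive closure commutes with inversion, $(\rho^*)^{-1} = (\rho^{-1})^*$ (a path from $t$ to $s$ in $\Gamma(\rho)$ is exactly a reversed path from $s$ to $t$ in $\Gamma(\rho^{-1})$), together with idempotency $(\rho^*)^* = \rho^*$. Given these, cyclicity $\Leftrightarrow$ (1) is immediate: if $\rho^* = (\rho^*)^{-1}$, then $\rho^{-1} \subseteq (\rho^*)^{-1} = \rho^*$ since $\rho \subseteq \rho^*$; conversely, $\rho^{-1} \subseteq \rho^*$ gives $(\rho^*)^{-1} = (\rho^{-1})^* \subseteq (\rho^*)^* = \rho^*$, and applying inversion to both sides yields the reverse inclusion, hence equality.

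For $(1) \Rightarrow (3)$, I would take a weakly connected component $C$ of $\Gamma(\rho)$ and two of its vertices $s,t$; an undirected $s$–$t$ path uses steps each lying in $\rho$ or in $\rho^{-1}$, hence in $\rho^*$ by (1), so transitivity of $\rho^*$ yields a directed path $s \to t$, and symmetrically one $t \to s$; thus $C$ is strongly connected. For $(3) \Rightarrow (2)$, I would show every arc of $\Gamma(\rho)$ lies on a cycle: a loop is its own cycle, and for an arc $(s,t)$ with $s \ne t$ the endpoints lie in a common component, which is strongly connected, so a shortest directed path from $t$ back to $s$ is simple and meets $s$ only at its end, and prepending the arc $(s,t)$ closes it into a simple cycle through that arc. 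For $(2) \Rightarrow (1)$, given $(s,t) \in \rho$, the arc $(s,t)$ lies on a cycle, and traversing that cycle onward from $t$ returns to $s$, so $(t,s) \in \rho^*$; hence $\rho^{-1} \subseteq \rho^*$.

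I do not expect a serious obstacle; the proposition is essentially a dictionary translating the relational statement into the graph-theoretic one. The only points requiring care are the bookkeeping of reflexivity — $\rho^*$ contains the diagonal because empty paths are allowed, which handles loops and the degenerate case $s = t$ uniformly — and fixing the convention that a \emph{cycle} in $\Gamma(\rho)$ is a closed simple walk, with loops and digons permitted (consistent with the earlier remark that a cycle ``can be a loop''), so that ``union of cycles'' in (2) means precisely that every arc lies on such a cycle.
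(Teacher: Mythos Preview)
Your proof is correct and follows essentially the same elementary approach as the paper: the paper proves the chain cyclic $\Rightarrow$ (1) $\Rightarrow$ (2) $\Rightarrow$ (3) $\Rightarrow$ cyclic, whereas you prove cyclic $\Leftrightarrow$ (1) directly via the identities $(\rho^*)^{-1}=(\rho^{-1})^*$ and $(\rho^*)^*=\rho^*$, and then close the loop as $(1)\Rightarrow(3)\Rightarrow(2)\Rightarrow(1)$. The underlying ideas---every arc lies on a cycle iff its reverse is in $\rho^*$, and weak components are strong exactly when this happens---are identical, so the difference is only the order in which the implications are traversed.
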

\begin{proof}
If $\rho$ is cyclic, then $\rho^{-1} \subseteq (\rho^*)^{-1} = \rho^*$ (1).

If for each arc $(s,t) \in \rho$, we have $(t,s) \in \rho^*$ (1), so there is a simple path $P$ from $s$ to $t$ in $\Gamma(\rho)$, then the path $(s,t) P$ is a cycle containing $(s,t)$ (2).

From~(2), every $s$ and $t$ from one weakly connected component are in a cycle, thus they are in the same strongly connected component (3).

From~(3), for every $(s,t) \in \rho^*$, there is also a path from $t$ to $s$, thus we have $(t,s) \in \rho^*$, and vice versa, thus the relation is cyclic.
\end{proof}

\begin{corollary}\label{cor:TransitiveClosureCyclic}
The transitive closure of a cyclic relation $\rho$ is an equivalence relation:
\[ \rho^* = \epsilon[\rho] .\]
\end{corollary}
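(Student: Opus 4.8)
The plan is to verify directly that $\rho^*$ satisfies the three defining properties of an equivalence relation, and then invoke the minimality of $\epsilon[\rho]$. First I would note that $\rho^*$ is reflexive: by the definition of $\rho^*$ as the reachability relation of $\Gamma(\rho)$, the empty path from $s$ to $s$ witnesses $(s,s)\in\rho^*$ for every $s\in Q$. Transitivity of $\rho^*$ is immediate, since the concatenation of a path from $s$ to $t$ with a path from $t$ to $u$ is a path from $s$ to $u$ (this is just the standard fact that a transitive closure is transitive). Symmetry of $\rho^*$ is exactly the hypothesis that $\rho$ is cyclic, namely $\rho^* = (\rho^*)^{-1}$. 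Hence $\rho^*$ is an equivalence relation, and it clearly contains $\rho$.

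Next I would use that $\epsilon[\rho]$ is, by definition, the smallest equivalence relation containing $\rho$. Since $\rho^*$ is an equivalence relation with $\rho \subseteq \rho^*$, minimality gives $\epsilon[\rho] \subseteq \rho^*$. For the reverse inclusion $\rho^* \subseteq \epsilon[\rho]$, I would observe that $\epsilon[\rho]$ contains $\rho$ and is transitive, so it contains the transitive closure $\rho^*$; alternatively one can quote \autoref{prop:EquivalenceClosureOfTransitive}, which gives $\epsilon[\rho] = \epsilon[\rho^*] \supseteq \rho^*$. Combining the two inclusions yields $\rho^* = \epsilon[\rho]$.

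There is essentially no obstacle here: the only points requiring a moment's care are that ``path'' is understood to include the empty path (so that reflexivity is free rather than needing the graph to have loops), and that the notion ``cyclic'' was defined precisely as the symmetry of $\rho^*$, so symmetry is handed to us rather than derived. Everything else is routine bookkeeping with closures, so I would keep the written proof to a few lines.
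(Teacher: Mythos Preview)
Your argument is correct and matches the paper's proof in substance: both establish $\rho^* \subseteq \epsilon[\rho]$ via \autoref{prop:EquivalenceClosureOfTransitive}, and both obtain $\epsilon[\rho] \subseteq \rho^*$ from the reflexivity, transitivity, and (by the cyclic hypothesis) symmetry of $\rho^*$. The only cosmetic difference is that you invoke the minimality of $\epsilon[\rho]$ directly, whereas the paper uses the explicit formula $\epsilon[\rho] = (\IdentityRel_Q \cup \rho \cup \rho^{-1})^*$ together with \autoref{prop:cyclic}(1) and monotonicity of the transitive closure.
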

\begin{proof}
By~\autoref{prop:EquivalenceClosureOfTransitive}, we have $\rho^* \subseteq \epsilon[\rho^*] = \epsilon[\rho]$.
Consider the other direction.
The identity relation $\IdentityRel_Q$ and $\rho$ are trivially contained in $\rho^*$.
From~\autoref{prop:cyclic}(1), we also have $\rho^{-1} \subseteq \rho^*$.
Thus, $\epsilon[\rho] \subseteq \rho^*$, and the equality follows.
\end{proof}

\subsection{Transition Semigroups of Automata with Permutational and Unitary Letters}

Let $\mathrsfs{A} = (Q,\Sigma,\delta)$ be a strongly connected PU-automaton, and let $P$ and $U$ be respectively the set of the permutations and the set of unitary transformations induced by its letters.
The transition monoid of this automaton is $(Q,\langle P \cup U\rangle)$.

Consider the submonoid $G = (Q,\langle P\rangle)$, which is a permutation group on $Q$.
Note that the reachability relation in $G$ is an equivalence relation, i.e., each weakly connected component is strongly connected, which is a class in this relation called an \emph{orbit}.
We denote by $(s)G$ the orbit of $G$ that contains $s \in Q$: $(s)G = \{(s)g \mid g \in G\}$.

The action of the permutation group $G$ continues on the square $Q \times Q$.
The (strongly) connected components of $G$ on $Q \times Q$ are called \emph{orbitals}.
Denote by $(s,t)G$ the orbital of $G$ that contains $(s,t)$, i.e., $(s,t)G = \{(s,t)g \mid g \in G\}$.
Note that each orbital is an invariant relation of $G$.
Two pairs of states are \emph{$G$-equivalent} if they belong to the same orbital of $G$.

We define the unitary relation $\delta(U)$ as follows:
\[ \delta(U) = \bigcup_{(s \to t) \in U} (s,t) .\]
The group closure of the idempotent relation is defined by the formula:
\begin{equation}
\pi = (\delta(U)) G = \bigcup_{(s,t) \in \delta(U)} (s,t)G . \label{eq:pi}
\end{equation}
Since $\pi$ is a union of orbitals of $G$, it is an invariant relation for $G$.

A pair of states $(s,t) \in \pi$ is called \emph{internal} if both states are in the same orbit of $G$.
The pair is \emph{external} if $s$ and $t$ are in different orbits of the group.
We split $\pi$ into disjoint $\pi_\mathrm{int}$ containing the internal pairs and $\pi_\mathrm{ext}$ containing the external pairs.

\begin{lemma}\label{lem:piInvariant}
The relations $\pi_\mathrm{int}$ and $\pi_\mathrm{ext}$ are invariant for $G$.
\end{lemma}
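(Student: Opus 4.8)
The statement is that both $\pi_\mathrm{int}$ and $\pi_\mathrm{ext}$ are invariant relations for $G$. Since $\pi$ itself is invariant for $G$ (being a union of orbitals) and $\pi = \pi_\mathrm{int} \cup \pi_\mathrm{ext}$ is a disjoint union, it suffices to show that each of the two pieces is individually closed under the action of every $g \in G$, i.e.\ $(\pi_\mathrm{int})g \subseteq \pi_\mathrm{int}$ and $(\pi_\mathrm{ext})g \subseteq \pi_\mathrm{ext}$ for all $g \in G$; together with $(\pi)g = \pi$ this forces equality. The key observation is that $g$, being a permutation in the group $G$, permutes the orbits of $G$; in fact it fixes each orbit setwise, since an orbit is precisely a strongly connected component of the action of $G$ on $Q$, and applying a group element to an element of $(s)G$ lands again in $(s)G$. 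Thus $g$ maps each orbit of $G$ bijectively onto itself.

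First I would fix $g \in G$ and a pair $(s,t) \in \pi_\mathrm{int}$, so $s$ and $t$ lie in a common orbit $O = (s)G = (t)G$. Then $(s)g$ and $(t)g$ both lie in $(O)g = O$ by the observation above, hence $(s,t)g = ((s)g,(t)g)$ is again a pair of states inside a single orbit. Since $(s,t)g \in \pi$ (because $\pi$ is invariant for $G$), and its two components are in the same orbit, by definition $(s,t)g \in \pi_\mathrm{int}$. This gives $(\pi_\mathrm{int})g \subseteq \pi_\mathrm{int}$. Symmetrically, for $(s,t) \in \pi_\mathrm{ext}$ the states $s,t$ lie in distinct orbits $O_1 \ne O_2$; then $(s)g \in O_1$ and $(t)g \in O_2$ are still in distinct orbits, and $(s,t)g \in \pi$, so $(s,t)g \in \pi_\mathrm{ext}$, giving $(\pi_\mathrm{ext})g \subseteq \pi_\mathrm{ext}$.

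Finally I would assemble the two inclusions: $(\pi_\mathrm{int})G \subseteq \pi_\mathrm{int}$ and $(\pi_\mathrm{ext})G \subseteq \pi_\mathrm{ext}$, together with $(\pi_\mathrm{int})G \cup (\pi_\mathrm{ext})G = (\pi)G = \pi = \pi_\mathrm{int} \cup \pi_\mathrm{ext}$ and the disjointness of $\pi_\mathrm{int}$ and $\pi_\mathrm{ext}$, force $(\pi_\mathrm{int})G = \pi_\mathrm{int}$ and $(\pi_\mathrm{ext})G = \pi_\mathrm{ext}$. Hence both relations are invariant for $G$.

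I do not anticipate a serious obstacle here; the proof is essentially bookkeeping. The only point that must be stated carefully is that every $g \in G$ stabilizes each orbit of $G$ setwise — this is immediate from the definition of orbit but deserves an explicit sentence, since it is the sole structural fact doing the work. One should also make sure that the argument only uses group elements (where the "same orbit"/"different orbit" dichotomy is preserved), and not arbitrary elements of the full transition monoid, which could merge orbits; this is why the lemma is stated for $G$ and not for $\langle P \cup U\rangle$.
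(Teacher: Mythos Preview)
Your proposal is correct and follows essentially the same argument as the paper: use invariance of $\pi$ for $G$, and observe that any $g \in G$ preserves the ``same orbit / different orbit'' dichotomy, so internal pairs map to internal pairs and external pairs to external pairs. Your closing assembly step to upgrade the inclusions to equalities is fine but unnecessary, since $G$ contains the identity and hence $\rho \subseteq (\rho)G$ automatically for any relation $\rho$.
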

\begin{proof}
Since $\pi$ is invariant, for every $(s,t) \in \pi_\mathrm{int}$, we have $((s)g,(t)g) \in \pi$.
If $(s,t) \in \pi_\mathrm{int}$, thus $s$ and $t$ are in the same orbit of $G$, then we know that also $(s)g$ and $(t)g$ are in the same orbit, hence we have an internal arc: $((s)g,(t)g) \in \pi_\mathrm{int}$.
Similarly, if $(s,t) \in \pi_\mathrm{ext}$, thus $s$ and $t$ are in different orbits of $G$, then $(s)g$ and $(t)g$ are in different orbits, hence we have an external arc: $((s)g,(t)g) \in \pi_\mathrm{ext}$.
\end{proof}

The following statement in other terms has been used in the theory of permutation groups \cite{BBIT2021AlgebraicCombinatorics}.
\begin{lemma}\label{lem:piIntCyclic}
The relation $\pi_\mathrm{int}$ is cyclic.
\end{lemma}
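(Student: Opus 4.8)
The goal is to show that $\pi_{\mathrm{int}}$ is cyclic, i.e.\ (by \autoref{prop:cyclic}(1)) that $\pi_{\mathrm{int}}^{-1} \subseteq \pi_{\mathrm{int}}^*$. Since $\pi_{\mathrm{int}}$ is a union of orbitals of $G$, and for each orbital $(s,t)G$ the inverse relation $(t,s)G$ is again an orbital of $G$, it suffices to prove the following: for every internal pair $(s,t) \in \pi_{\mathrm{int}}$, the reversed pair $(t,s)$ lies in $\pi_{\mathrm{int}}^*$. I would first reduce to a generating pair: by definition of $\pi = (\delta(U))G$, every $(s,t) \in \pi$ is of the form $((p)g,(q)g)$ for some $(p \to q) \in U$ and $g \in G$; since $\pi_{\mathrm{int}}$ is invariant for $G$ (\autoref{lem:piInvariant}) and $\pi_{\mathrm{int}}^*$ is invariant for $G$ as well (it is the transitive closure of an invariant relation, \autoref{prop:TransitiveClosure}), it is enough to handle the case of a single unitary arc $(p,q) \in \delta(U)$ with $p,q$ in the same $G$-orbit, and then transport the conclusion by $g$.

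So fix $(p \to q) \in U$ with $p$ and $q$ in a common orbit $O = (p)G = (q)G$. Pick $g \in G$ with $(p)g = q$; I want a short path in $\Gamma(\pi_{\mathrm{int}})$ from $q$ back to $p$. The key observation is that applying $g$ to the arc $(p,q)$ keeps us inside $\pi_{\mathrm{int}}$ (invariance), so $(p,q)g = (q,(q)g) \in \pi_{\mathrm{int}}$; applying $g$ again gives $((q)g,(q)g^2) \in \pi_{\mathrm{int}}$, and so on. Iterating, the orbit of the arc $(p,q)$ under the cyclic group $\langle g\rangle$ produces a closed walk
\[ p \;\longrightarrow\; q=(p)g \;\longrightarrow\; (p)g^2 \;\longrightarrow\; \cdots \;\longrightarrow\; (p)g^{m-1} \;\longrightarrow\; (p)g^m = p, \]
where $m$ is the order of $g$ restricted to $O$ (equivalently the length of the cycle of $g$ through $p$). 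Every consecutive pair $((p)g^i,(p)g^{i+1})$ equals $(p,q)g^i \in \pi_{\mathrm{int}}$, so this walk lies entirely in $\Gamma(\pi_{\mathrm{int}})$, and in particular the sub-walk from $q$ to $p$ witnesses $(q,p) \in \pi_{\mathrm{int}}^*$. Thus $(q,p) \in \pi_{\mathrm{int}}^*$, i.e.\ $(p,q)^{-1} \in \pi_{\mathrm{int}}^*$. Now transport: for arbitrary $(s,t) = (p,q)h \in \pi_{\mathrm{int}}$ with $h \in G$, applying $h$ to the path just constructed (and using that $\pi_{\mathrm{int}}^*$ is $G$-invariant) gives a path from $t$ to $s$ in $\Gamma(\pi_{\mathrm{int}})$; hence $\pi_{\mathrm{int}}^{-1} \subseteq \pi_{\mathrm{int}}^*$, and $\pi_{\mathrm{int}}$ is cyclic by \autoref{prop:cyclic}(1).

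The one point that needs care — and which I expect to be the only real subtlety — is making sure that the intermediate vertices $(p)g^i$ of this closed walk all stay \emph{internal}, so that the walk really lives in $\Gamma(\pi_{\mathrm{int}})$ and not merely in $\Gamma(\pi)$; this is exactly what \autoref{lem:piInvariant} buys us, since $g$ maps an internal pair to an internal pair. A secondary bookkeeping point is the reduction step: one should note that it suffices to prove $(q,p)\in\pi_{\mathrm{int}}^*$ for the finitely many generating unitary arcs in the same orbit, because an arbitrary element of $\pi_{\mathrm{int}}$ is a $G$-image of such an arc and $\pi_{\mathrm{int}}^*$ is $G$-invariant. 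Everything else is a direct application of the closure properties already established (Propositions~\ref{prop:SymmetricClosure}, \ref{prop:TransitiveClosure} and \autoref{prop:cyclic}).
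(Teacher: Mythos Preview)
Your proposal is correct and uses essentially the same idea as the paper: given an internal arc, pick $g\in G$ sending its source to its target and iterate $g$ to obtain a cycle in $\Gamma(\pi_{\mathrm{int}})$ containing that arc, invoking \autoref{lem:piInvariant} to ensure each iterate stays in $\pi_{\mathrm{int}}$. The paper applies this argument directly to an arbitrary $(s,t)\in\pi_{\mathrm{int}}$ and concludes via \autoref{prop:cyclic}(2), so your preliminary reduction to a generating unitary arc $(p,q)\in\delta(U)$ and the subsequent transport by $h$ are unnecessary detours (though harmless).
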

\begin{proof}
Consider the orgraph $\Gamma(\pi_\mathrm{int}) = (Q,\pi_\mathrm{int})$ and an arc $(s,t) \in \pi_\mathrm{int}$.
Since $s$ and $t$ are in the same orbit of $G$, there is a permutation $g \in G$ such that $(s)g = t$.
Denote by $k \ge 1$ the minimum number such that $(t)g^k = s$, and consider the sequence of pairs $C = ((s,t)g^i\colon i \le i \le k)$, where $g^0$ is the identity map (the unit of $G$).
As $\pi_\mathrm{int}$ is invariant for $G$, all the arcs from $C$ are in $\Gamma(\pi_\mathrm{int})$ thus $C$ is a cycle in $\Gamma(\pi_\mathrm{int})$.
It follows that every arc from $\Gamma(\pi_\mathrm{int})$ is in a cycle, thus $\Gamma(\pi_\mathrm{int})$ is a union of cycles and by~\autoref{prop:cyclic}, $\pi_\mathrm{int}$ is cyclic.
\end{proof}

Consider the orgraph $\Gamma(\pi_\mathrm{ext})$ and let $(s,t) \in \pi_\mathrm{ext}$.
Then the pair $(s,t)$ is located between the orbits $(s)G$ and $(t)G$ of the group $G$.
Consider the orbital $(s,t)G$ of the pair $(s,t)$ and its orbital orgraph $\Gamma(s,t) = {((s)G \cup (t)(G), (s,t)G)}$.
It follows from formula~(\ref{eq:pi}) and Lemma~\ref{lem:piInvariant} that $\Gamma(s,t)$ is a subgraph of $\Gamma(\pi_\mathrm{ext})$.

\begin{lemma}\label{lem:bipartite}
For an external pair $(s,t) \in \pi_\mathrm{ext}$, the orbital orgraph $\Gamma(s,t)$ is a bipartite graph with parts $(s)G$ and $(t)G$ such that from each state in $(s)G$ goes out an arc to a state in the orbit $(t)G$.
\end{lemma}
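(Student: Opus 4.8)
The plan is to unwind the definitions: the arc set of $\Gamma(s,t)$ is by construction the orbital $(s,t)G = \{((s)g,(t)g) \mid g \in G\}$, and $G$ permutes each of its orbits onto itself, so every arc has its tail in $(s)G$ and its head in $(t)G$. From this both claims fall out directly.

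First I would prove bipartiteness. Let $(p,q) \in (s,t)G$ be an arbitrary arc, so $p = (s)g$ and $q = (t)g$ for some $g \in G$. Since the orbits of $G$ are $G$-invariant, $p \in (s)G$ and $q \in (t)G$. Because $(s,t)$ is external, $(s)G \ne (t)G$, and distinct orbits are disjoint, so $(s)G \cap (t)G = \emptyset$. Hence every arc of $\Gamma(s,t)$ runs from the part $(s)G$ to the part $(t)G$; in particular no arc has both endpoints in a single part, so $\Gamma(s,t)$ is bipartite with parts $(s)G$ and $(t)G$ (and, moreover, all arcs are oriented from $(s)G$ to $(t)G$).

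Next I would establish the covering property. Fix $u \in (s)G$; by definition of the orbit there is $g \in G$ with $(s)g = u$. Then $(s,t)g = (u,(t)g)$ lies in $(s,t)G$, i.e.\ it is an arc of $\Gamma(s,t)$ with tail $u$ and head $(t)g \in (t)G$. Thus from every state of $(s)G$ there goes out an arc to a state of $(t)G$, as required. (Symmetrically, since $G$ is a group, every state of $(t)G$ is the head of some arc, but this is not needed here.)

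There is no substantial obstacle in this lemma: the only point requiring attention is invoking the externality hypothesis, which is exactly what guarantees that the two claimed parts are genuinely disjoint, so that ``bipartite with parts $(s)G$ and $(t)G$'' is meaningful; everything else is immediate from the definition of the orbital orgraph $\Gamma(s,t)$ and the fact that $G$ acts on each of its orbits by permutations.
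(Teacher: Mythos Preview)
Your proof is correct and follows essentially the same approach as the paper's: both arguments use that an arbitrary state $s' \in (s)G$ is $(s)g$ for some $g \in G$, and then $(s,t)g = (s',(t)g)$ is the required outgoing arc into $(t)G$. Your write-up is simply more explicit about the bipartiteness part (which the paper leaves implicit from the definition of $\Gamma(s,t)$ and the externality of $(s,t)$), but the underlying idea is identical.
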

\begin{proof}
For each state $s' \in (s)G$, we can find a permutation $g \in G$ such that $(s',t') = (s,t)g$ is the outgoing arc from $s'$ to a state $t' \in (t)G$.
\end{proof}

\begin{lemma}\label{lem:pathThroughOrbits}
For every $s,t \in Q$, there is a path in the orgraph $\Gamma(\pi_\mathrm{ext})$ from $s$ to a state in the orbit $(t)G$.
\end{lemma}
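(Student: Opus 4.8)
The plan is to exploit the strong connectivity of $\mathrsfs{A}$ together with \autoref{lem:bipartite}. Given $s,t \in Q$, strong connectivity provides a word $w$ with $(s)\delta(w) = t$; write the induced trajectory as $s = s_0, s_1, \ldots, s_m = t$, where $s_{i+1}$ is the image of $s_i$ under the $(i+1)$-th letter of $w$. I would then track the sequence of $G$-orbits $(s_0)G, (s_1)G, \ldots, (s_m)G$ visited along this trajectory and show that whenever this orbit changes, the corresponding step is an external arc of $\pi$. Collapsing consecutive repetitions gives a sequence of \emph{distinct} orbits $O_0 = (s)G, O_1, \ldots, O_k = (t)G$ such that there is an arc of $\Gamma(\pi_\mathrm{ext})$ between $O_j$ and $O_{j+1}$ for each $j$.

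For the orbit-change analysis: if the relevant letter is permutational, then $s_i$ and $s_{i+1}$ lie in the same $G$-orbit because the induced permutation belongs to $G$ and $G$-orbits are $G$-invariant, so no change occurs. If it is a unitary letter $(p \to q)$ and the orbit does change, then $s_i$ cannot be fixed by it, hence $s_i = p$ and $s_{i+1} = q$ with $p \neq q$; since $(p,q) \in \delta(U) \subseteq \pi$ and $p, q$ lie in different orbits, we get $(s_i, s_{i+1}) \in \pi_\mathrm{ext}$. Consequently, for each $j$ there is an external arc from some state of $O_j$ to some state of $O_{j+1}$.

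The remaining step — and the only real subtlety — is that the original trajectory may leave $O_j$ at a state different from the one at which the path we are constructing currently sits. This is precisely bridged by \autoref{lem:bipartite}: since there is an external arc between $O_j$ and $O_{j+1}$, the corresponding orbital orgraph is bipartite with parts $O_j$ and $O_{j+1}$, and \emph{every} state of $O_j$ has an outgoing external arc into $O_{j+1}$. So I would build the required path in $\Gamma(\pi_\mathrm{ext})$ inductively: start at $s \in O_0$; having reached some state of $O_j$, use \autoref{lem:bipartite} to step to some state of $O_{j+1}$; after $k$ steps we are at a state of $O_k = (t)G$. If $s$ and $t$ already lie in the same orbit, then $k = 0$ and the empty path suffices. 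I do not anticipate a genuine obstacle beyond packaging these pieces: the content is entirely in the observation that orbit changes along a synchronizing-type trajectory are witnessed by external arcs, plus the invocation of \autoref{lem:bipartite} to make the re-routing between orbits legitimate.
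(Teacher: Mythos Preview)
Your proposal is correct and follows essentially the same route as the paper: use strong connectivity to obtain a trajectory from $s$ toward $(t)G$, observe that each orbit change along it is witnessed by an arc of $\pi_\mathrm{ext}$, and then invoke \autoref{lem:bipartite} to re-route so that the resulting walk in $\Gamma(\pi_\mathrm{ext})$ starts exactly at $s$ and visits the same orbit sequence. (One minor wording point: collapsing consecutive repetitions only makes \emph{adjacent} orbits distinct, not all $O_j$ pairwise distinct, but your inductive construction only needs the former.)
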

\begin{proof}
Since we assume that $\mathrsfs{A}$ is strongly connected, there is a path of transitions $e_1,\ldots,e_k$ from the state $s$ to a state in $(t)G$, i.e.,
for each $e_i = (s_i,t_i)$ there is either a permutational letter inducing $g \in P$ such that $(s_i)g = t_i$ or the unitary transition $(s_i \to t_i) \in U$. 
In the former case, $e_i$ is an internal arc from $\pi_\mathrm{int}$, and in the latter case, it can be either an internal or an external arc from $\pi$.

We remove all internal arcs, obtaining a sequence of external arcs $e_{i_1}, \ldots, e_{i_m}$ only, which is a subsequence of $e_1,\ldots,e_k$.
If this is the empty sequence ($m=0$), then $s \in (t)G$, so the lemma holds.
Otherwise, for the first arc $e_{i_1} = (s_{i_1},t_{i_1})$, we have $s_{i_1}$ in the orbit $(s)G$ and $t_{i_1}$ in another orbit.
By~\autoref{lem:bipartite} applied for $e_{i_1}$, we can find a pair $(s,s_1) \in \pi_\mathrm{ext}$ that is $G$-equivalent to $e_{i_1}$, thus where $s_1$ is a state from $(t_{i_1}) G$.
There may be several such arcs, and we can choose any.
Then for $e_{i_2}$, we choose an arc $(s_1,s_2) \in \pi_\mathrm{ext}$ in the same way, and so on till $e_{i_m}$.
As the result, we obtain a route in the orgraph $\Gamma(\pi_\mathrm{ext})$ that starts from $s$, follows the same orbits as the path $e_1,\ldots,e_k$, and ends in a state from $(t)G$.
\end{proof}

\begin{lemma}\label{lem:piExtCyclic}
The relation $\pi_\mathrm{ext}$ is cyclic.
\end{lemma}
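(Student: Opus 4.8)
The plan is to establish condition~(1) of \autoref{prop:cyclic}, namely $\pi_\mathrm{ext}^{-1} \subseteq \pi_\mathrm{ext}^*$. Fix an arbitrary arc $(s,t) \in \pi_\mathrm{ext}$. Since $s$ and $t$ lie in different orbits of $G$ we have $s \ne t$, and it suffices to produce a path in the orgraph $\Gamma(\pi_\mathrm{ext})$ from $t$ back to $s$. If $Q$ consists of a single orbit of $G$, then $\pi_\mathrm{ext} = \emptyset$ and there is nothing to prove, so we may assume there are at least two orbits.

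First I would invoke \autoref{lem:pathThroughOrbits} with the roles of $s$ and $t$ interchanged: it yields a path $P$ in $\Gamma(\pi_\mathrm{ext})$ from $t$ to some state $s' \in (s)G$. Prepending the arc $(s,t) \in \pi_\mathrm{ext}$ to $P$ gives a path $P'$ in $\Gamma(\pi_\mathrm{ext})$ that starts at $s$, has $(s,t)$ as its first arc, and ends at $s'$, where now $s$ and $s'$ belong to the same orbit $(s)G$.

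The key step then mimics the rotation argument of \autoref{lem:piIntCyclic}. Choose $g \in G$ with $(s)g = s'$, and let $k \ge 1$ be least with $(s)g^k = s$. Because $\pi_\mathrm{ext}$ is invariant for $G$ by \autoref{lem:piInvariant}, the image $P'g^i$ of $P'$ under any power $g^i$ is again a path in $\Gamma(\pi_\mathrm{ext})$, running from $(s)g^i$ to $(s)g^{i+1}$. Hence the concatenation
\[ W = P' \cdot (P'g) \cdot (P'g^2) \cdots (P'g^{k-1}) \]
is a closed walk in $\Gamma(\pi_\mathrm{ext})$ based at $s$ whose first arc is $(s,t)$; following $W$ from its second vertex $t$ to its final vertex $s$ gives a path from $t$ to $s$, so $(t,s) \in \pi_\mathrm{ext}^*$, as required by \autoref{prop:cyclic}(1). (If $s' = s$ already, then $P'$ is itself such a closed walk and the rotation is unnecessary.)

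I do not expect a real obstacle here: once \autoref{lem:pathThroughOrbits} is available, the only thing still needed is a way to close up the path from $s$ to $s'$ — which lives inside the single orbit $(s)G$ — into a cycle, and the $G$-rotation does exactly this while keeping every arc inside $\pi_\mathrm{ext}$ by invariance. The only points requiring a little care are the degenerate single-orbit case and the possibility $s' = s$, both dealt with above.
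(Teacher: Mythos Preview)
Your argument is correct and follows the paper's strategy: use \autoref{lem:pathThroughOrbits} to get a $\pi_\mathrm{ext}$-path from $t$ back into the orbit $(s)G$, then exploit the $G$-invariance of $\pi_\mathrm{ext}$ (\autoref{lem:piInvariant}) to close it into a cycle through $(s,t)$. The only difference is in the closing-up step. The paper repeatedly re-applies \autoref{lem:pathThroughOrbits} from each successive $t_j$, producing independent paths $P_1,P_2,\ldots$ until some state of $(s)G$ recurs; this yields a cycle through a $G$-equivalent arc $(s_i,t_i)$, which is then transported by a suitable $g\in G$ to a cycle through $(s,t)$. You instead fix one $g$ with $(s)g=s'$ and iterate the \emph{same} path $P'$ under powers of $g$, exactly as in \autoref{lem:piIntCyclic}. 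Your variant is a little tidier---it unifies the internal and external cases and avoids the final transport---while the paper's version is marginally more self-contained in that each $P_j$ is chosen afresh. One cosmetic remark: the concatenation $W$ is in general a walk rather than a simple path, but a walk is all that is needed for $(t,s)\in\pi_\mathrm{ext}^*$.
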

\begin{proof}
Consider the orgraph $\Gamma(\pi_\mathrm{ext}) = (Q,\pi_\mathrm{ext})$ and an arc $(s_1,t_1) \in \pi_\mathrm{ext}$.
We are going to show that $(s_1,t_1) \in \pi_\mathrm{ext}$ is in some cycle in $\Gamma(\pi_\mathrm{ext})$, thus the lemma follows by~\autoref{prop:cyclic}.

By~\autoref{lem:pathThroughOrbits}, there is a path $P_1$ in the orgraph $\Gamma(\pi_\mathrm{ext})$ from $t_1$ to some state $s_2 \in (s_1)G$.
If $s_1 = s_2$, then we have found the cycle $(s_1,t_1) P_1$, which contains $(s_1,t_1)$.
Otherwise, we choose a state $t_2 \in (t_1)G$ such that the arc $(s_2,t_2) \in \pi_\mathrm{ext}$ is $G$-equivalent to $(s_1,t_1)$, and we again choose a path $P_2$ from $t_2$ to some state $s_3 \in (s_1)G$.
We continue this process, which stops when the found path $P_k$ ends in a state $s_{k+1} \in (s_1)G$ that has been encountered before as an $s_i$ for some $i \le k$.
As the size of $(s_1)G$ is finite, this finally happens.

Now, $(s_i,t_i) P_i \ldots (s_k,t_k) P_k$ is a closed route containing $(s_i,t_i) \in \pi_\mathrm{ext}$.
Let $X$ be the path separated from the route $P_i \ldots (s_k,t_k) P_k$ by removing all cycles so that every state occurs at most once.
Then $C = (s_i,t_i) X$ is a cycle containing the external arc $(s_i,t_i)$, which is $G$-equivalent to $(s_1,t_1)$.
Let $g \in G$ be such that $(s_1,t_1) = (s_i,t_i)g$.
Then the cycle $C' = (C)g$ is contained in $\Gamma(\pi_\mathrm{ext})$ (as $\pi_\mathrm{ext}$ is invariant for $G$) and contains $(s_1,t_1)$.
\end{proof}

\begin{corollary}\label{cor:piCyclic}
The relation $\pi = \pi_\mathrm{int} \cup \pi_\mathrm{ext}$ is cyclic.
\end{corollary}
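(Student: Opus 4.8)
The plan is to argue entirely at the level of orgraphs, using the characterisation of cyclic relations from \autoref{prop:cyclic}: a relation is cyclic exactly when its orgraph is a union of (not necessarily disjoint) cycles, equivalently when $\rho^{-1}\subseteq\rho^*$. The two ingredients are already in place: \autoref{lem:piIntCyclic} gives that $\pi_{\mathrm{int}}$ is cyclic and \autoref{lem:piExtCyclic} gives that $\pi_{\mathrm{ext}}$ is cyclic. Since $\pi$ is the union $\pi_{\mathrm{int}}\cup\pi_{\mathrm{ext}}$ on the common vertex set $Q$, the orgraphs $\Gamma(\pi_{\mathrm{int}})$ and $\Gamma(\pi_{\mathrm{ext}})$ are spanning subgraphs of $\Gamma(\pi)$.

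First I would take an arbitrary arc $(s,t)\in\pi$ and split into the cases $(s,t)\in\pi_{\mathrm{int}}$ and $(s,t)\in\pi_{\mathrm{ext}}$. In either case, by \autoref{prop:cyclic}(2) applied to the corresponding relation, $(s,t)$ lies on a cycle inside $\Gamma(\pi_{\mathrm{int}})$, respectively $\Gamma(\pi_{\mathrm{ext}})$, and that cycle is also a cycle of $\Gamma(\pi)$. Hence every arc of $\Gamma(\pi)$ lies on a cycle, so $\Gamma(\pi)$ is a union of cycles, and \autoref{prop:cyclic} yields that $\pi$ is cyclic. Equivalently, in the $\rho^{-1}\subseteq\rho^*$ formulation: if $(t,s)\in\pi^{-1}$ then $(s,t)$ lies in $\pi_{\mathrm{int}}$ or in $\pi_{\mathrm{ext}}$, and \autoref{prop:cyclic}(1) applied to that part gives $(t,s)\in\pi_{\mathrm{int}}^*\subseteq\pi^*$ or $(t,s)\in\pi_{\mathrm{ext}}^*\subseteq\pi^*$, so $\pi^{-1}\subseteq\pi^*$.

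I do not expect a genuine obstacle here: once Lemmas~\ref{lem:piIntCyclic} and~\ref{lem:piExtCyclic} are available, the corollary is just the observation that a union of two ``union-of-cycles'' orgraphs on a common vertex set is again a union of cycles. The one point worth flagging is that this relies on taking a \emph{union} — each newly added arc arrives together with a covering cycle — whereas cyclicity is not preserved by intersections or set differences, so one should resist trying to establish it through a more general closure argument.
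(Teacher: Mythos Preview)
Your argument is correct and matches the paper's intent: the paper states \autoref{cor:piCyclic} without proof, treating it as immediate from Lemmas~\ref{lem:piIntCyclic} and~\ref{lem:piExtCyclic}, and your spelling-out via \autoref{prop:cyclic}(2) (every arc of $\pi$ lies on a cycle already in one of the two sub-orgraphs) is exactly the intended justification.
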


We recall the known characterization of synchronizing automata with permutational and unitary letters.
In the literature, the orgraph $\Gamma(\pi)$ is often called the Rystsov graph of an automaton \cite{BV2018CharacterizationOfCompletelyReachable,CV2022BinaryCompletelyReachable,H2022ConstrainedSynchronization}.

\begin{theorem}[{\cite[Corollary~2 and Theorem~1]{R2000EstimationSimpleIdempotents}}]\label{thm:PU-Synchronizing}
A strongly connected PU-automaton is synchronizing if and only if the orgraph $\Gamma(\pi)$ is strongly connected.
\end{theorem}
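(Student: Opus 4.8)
The plan is to derive both implications from the cyclicity of $\pi$ (\autoref{cor:piCyclic}); granted that, neither direction is long. The one elementary fact I would record first is that every arc of the orgraph $\Gamma(\pi)$ is realised by a unitary transformation inside the transition monoid $M=\langle P\cup U\rangle$: if $(s,t)\in\pi$, then $(s,t)=(s',t')g$ for some letter-induced unitary $(s'\to t')\in U$ and some $g\in G$, and a direct computation gives $g^{-1}\,(s'\to t')\,g=(s\to t)$, a transformation that lies in $M$ since $G\subseteq M$ is a group. With this in hand, the direction ``$\Leftarrow$'' is quick. Assume $\Gamma(\pi)$ is strongly connected and let $\{p,q\}$ with $p\neq q$ be an arbitrary pair; I show it is compressible, which gives synchronizability. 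Pick a simple path $p=v_0,v_1,\dots,v_k=q$ in $\Gamma(\pi)$ and set $w=(v_0\to v_1)(v_1\to v_2)\cdots(v_{k-1}\to v_k)\in M$. Tracing the path shows that $w$ fixes all states off the path and sends each $v_i$ to $v_k$ (the state $v_i$ stays put until the factor $(v_i\to v_{i+1})$, after which it cascades to $v_k$), so in fact $w$ is the semiconstant map $(\{v_0,\dots,v_{k-1}\}\to v_k)$; in particular $(\{p,q\})w=\{q\}$.

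For ``$\Rightarrow$'', assume $\mathrsfs{A}$ is synchronizing and put $\rho=\pi^{*}$. By \autoref{cor:piCyclic} the relation $\pi$ is cyclic, so $\rho=\pi^{*}=\epsilon[\pi]$ is an equivalence relation by \autoref{cor:TransitiveClosureCyclic}, and it is a congruence of the group $G$ by \autoref{cor:EquivalenceClosure}. I would then upgrade this to: $\rho$ is a congruence of the whole monoid $M$. It only remains to check invariance under a unitary letter $u=(a\to b)$, and for $(s,t)\in\rho$ the only non-trivial case is (up to symmetry) $s=a$, where $((a)u,(t)u)=(b,t)$; here $(a,b)\in\delta(U)\subseteq\pi\subseteq\rho$, and since $\rho$ is symmetric (cyclicity again) and transitive, $(a,b),(a,t)\in\rho$ give $(b,t)\in\rho$. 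Now pass to the quotient automaton $\mathrsfs{A}/\rho$: each permutational letter induces a permutation on $Q/\rho$, while each unitary letter $(a\to b)$ induces the identity because $(a,b)\in\rho$ forces $[a]=[b]$; hence the transition monoid of $\mathrsfs{A}/\rho$ is a group of permutations. On the other hand $\mathrsfs{A}/\rho$ is a quotient of a synchronizing automaton, hence synchronizing (a reset word of $\mathrsfs{A}$ collapses $Q/\rho$ to one class), and a permutation group has a map of rank $1$ only when it acts on a single point. Therefore $|Q/\rho|=1$, i.e.\ $\rho=\TotalRel_Q$, which says precisely that $\Gamma(\pi)$ is strongly connected.

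The substantive obstacle---proving that $\pi$ is cyclic---is already behind us in \autoref{cor:piCyclic}, which is where the structure of the orbitals of $G$ and the strong connectivity of $\mathrsfs{A}$ are really used. Of the steps above, the one I expect to need the most care is verifying that $\rho=\pi^{*}$ is preserved by the unitary letters: collapsing $a$ onto $b$ threatens to move a $\rho$-related pair out of the closure, and it is exactly the symmetry of $\pi^{*}$ (that is, cyclicity) that rescues this; the rest is routine bookkeeping with transformations. The trivial corner cases ($|Q|=1$, or $U=\emptyset$ with $|Q|\ge 2$) can be disposed of directly, since then both sides of the equivalence are obviously true, respectively obviously false.
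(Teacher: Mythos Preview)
The paper does not prove this theorem at all; it imports it verbatim from Rystsov~2000 and uses it as a black box in the proof of \autoref{thm:U-Synchronizing}. So there is no ``paper's own proof'' to compare against, only your proposal.

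Your argument is correct. The $\Leftarrow$ direction is a clean direct construction: the conjugation identity $g^{-1}(s'\to t')g=(s\to t)$ indeed places every arc of $\Gamma(\pi)$ inside $M$, and composing unitary maps along a simple path yields a semiconstant map exactly as you describe. For the $\Rightarrow$ direction, your quotient argument is tidy and valid; the key step---showing that $\rho=\pi^{*}$ is invariant under unitary letters---is precisely where symmetry of $\rho$ (i.e., cyclicity of $\pi$) is needed, and your case analysis handles it. One small remark: to conclude that permutational letters induce \emph{permutations} on $Q/\rho$ (not merely well-defined maps), you are implicitly using that $\rho$ is invariant under $g^{-1}$ as well as $g$; this is fine since $G$ is a group, but it is worth saying.

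It is worth noting that your proof of $\Rightarrow$ relies on \autoref{cor:piCyclic}, a result the paper establishes \emph{after} stating the theorem you are proving and for a different purpose (the primitivity argument in \autoref{thm:U-Synchronizing}). There is no circularity---\autoref{cor:piCyclic} does not depend on \autoref{thm:PU-Synchronizing}---but the original Rystsov~2000 proof presumably proceeds without this machinery. Your route therefore reuses the paper's structural work to give a self-contained proof of the cited theorem, which is a reasonable and arguably more unified way to present the material.
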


We have now all ingredients for the main result of this section.

\begin{theorem}\label{thm:U-Synchronizing}
A primitive PU-automaton with at least one unitary letter is synchronizing.
\end{theorem}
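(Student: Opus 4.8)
The plan is to invoke \autoref{thm:PU-Synchronizing}, which for a strongly connected PU-automaton reduces synchronizability to strong connectivity of the orgraph $\Gamma(\pi)$, and then to derive the latter from primitivity. First I would dispose of the degenerate cases: if $n \le 2$, then (since $\mathrsfs{A}$ has a unitary letter) $n = 2$ and that letter already has rank $1$, so $\mathrsfs{A}$ is synchronizing; if $n \ge 3$, \autoref{prop:PrimitiveStronglyConnected} guarantees that a primitive automaton is strongly connected or $0$-transitive, and a $0$-transitive automaton is synchronized in its sink. Hence I may assume $\mathrsfs{A}$ is a strongly connected PU-automaton, so that all the constructions of the previous subsection apply — in particular \autoref{cor:piCyclic}, stating that $\pi$ is cyclic — and it suffices to prove $\pi^* = \TotalRel_Q$.

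The crucial step is to show that the equivalence closure $\epsilon[\pi]$ is a congruence of the \emph{entire} transition monoid $(Q, \langle P \cup U\rangle)$, not merely of the group $G = \langle P\rangle$ for which it was built. It is an equivalence relation by construction, so only invariance under the letters needs checking. For a permutational letter this is immediate: $\pi$ is a union of orbitals, hence invariant for $G$, and by \autoref{prop:SymmetricClosure} and \autoref{prop:TransitiveClosure} so is $\epsilon[\pi] = (\IdentityRel_Q \cup \pi \cup \pi^{-1})^*$ (cf.\ \autoref{cor:EquivalenceClosure}). For a unitary letter inducing $u = (p \to q)$, I would fix $(s,t) \in \epsilon[\pi]$ and note that, as $u$ fixes every state other than $p$, the pair $((s)u,(t)u)$ differs from $(s,t)$ at most by replacing occurrences of $p$ with $q$; since $(p,q) \in \delta(U) \subseteq \pi \subseteq \epsilon[\pi]$, transitivity and symmetry of $\epsilon[\pi]$ put $((s)u,(t)u)$ back in $\epsilon[\pi]$. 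Invariance then extends to all of $\langle P \cup U\rangle$ by composition, so $\epsilon[\pi]$ is indeed a congruence.

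From here the conclusion is quick. By primitivity $\epsilon[\pi]$ is trivial, i.e.\ equal to $\IdentityRel_Q$ or $\TotalRel_Q$. The first alternative is impossible because a unitary letter $(s\to t)$ with $s \ne t$ contributes the non-loop arc $(s,t) \in \delta(U) \subseteq \pi \subseteq \epsilon[\pi]$; this is exactly where the hypothesis ``at least one unitary letter'' enters. Thus $\epsilon[\pi] = \TotalRel_Q$, and since $\pi$ is cyclic, \autoref{cor:TransitiveClosureCyclic} gives $\pi^* = \epsilon[\pi] = \TotalRel_Q$, i.e.\ $\Gamma(\pi)$ is strongly connected. \autoref{thm:PU-Synchronizing} then finishes the proof.

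The main obstacle has in fact already been cleared in the previous subsection: it is the cyclicity of $\pi$ (\autoref{lem:piIntCyclic} and \autoref{lem:piExtCyclic}, combined in \autoref{cor:piCyclic}). Without it, $\epsilon[\pi] = \TotalRel_Q$ would only say that $\Gamma(\pi)$ is weakly connected, which does not suffice for \autoref{thm:PU-Synchronizing}. Within the proof of the theorem itself, the point requiring care is the verification that $\epsilon[\pi]$ is preserved by the unitary letters; this is precisely where the inclusion $\delta(U) \subseteq \pi$ is used, and it explains why $\pi = (\delta(U))G$ is the right object to confront with the primitivity hypothesis.
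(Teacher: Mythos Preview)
Your proof is correct and follows essentially the same route as the paper: reduce to the strongly connected case via \autoref{prop:PrimitiveStronglyConnected}, show that $\epsilon[\pi]$ is a congruence of the full transition monoid, conclude from primitivity and the existence of a unitary letter that $\epsilon[\pi]=\TotalRel_Q$, and then use cyclicity of $\pi$ (\autoref{cor:piCyclic} and \autoref{cor:TransitiveClosureCyclic}) to get $\pi^*=\TotalRel_Q$ and finish with \autoref{thm:PU-Synchronizing}. Your explicit check that $\epsilon[\pi]$ is preserved by each unitary letter $(p\to q)$ via $(p,q)\in\delta(U)\subseteq\epsilon[\pi]$ is in fact more careful than the paper, which cites \autoref{cor:EquivalenceClosure} from invariance of $\pi$ for $G$ alone --- a step that on its face only gives a congruence of $(Q,G)$ and tacitly relies on exactly the short argument you spell out.
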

\begin{proof}
By~\autoref{prop:PrimitiveStronglyConnected}, it is enough to consider a strongly connected primitive $U$-automaton.
Suppose that such an automaton is not synchronizing, thus by~\autoref{thm:PU-Synchronizing}, the orgraph $\Gamma(\pi)$ is not strongly connected.
Then, for the transitive closure $\pi^*$ of $\pi$, we obtain the strict inclusion: $\pi^* \subsetneq Q \times Q$.
Since the automaton has at least one unitary letter, the relation $\pi$ is non-empty and thus $\IdentityRel \subsetneq \pi^*$.
Since $\pi$ is cyclic (\autoref{cor:piCyclic}), from~\autoref{cor:TransitiveClosureCyclic}, we get $\pi^* = \epsilon[\pi]$ so it is an equivalence relation.
As $\pi$ is invariant for $G$ (\autoref{lem:piInvariant}), from~\autoref{cor:EquivalenceClosure}, the equivalence closure $\epsilon[\pi]$ is a congruence of the automaton's transition monoid $(Q,\langle P \cup U\rangle)$.
Thus, $\epsilon[\pi]$ is a non-trivial congruence of $(Q,\langle P \cup U\rangle)$, so the automaton is non-primitive.
\end{proof}

\subsection{Generalization to Semiconstant Letters}

The generalization follows by a decomposition of semiconstant transformations into unitary transformations.
If $\mathrsfs{A}$ is an PSc-automaton, then by $\mathrsfs{A}^{\mathrm{Sc}\to\mathrm{U}}$ we denote the PU-automaton obtained from $\mathrsfs{A}$ in the following way: every letter that induces a semiconstant transformation $(\{s_1,\ldots,s_k\} \to r)$, where the states $s_i$ are pairwise distinct and different from $r$, is replaced with $k$ fresh letters inducing the unitary transformations $(s_1 \to r), \ldots, (s_k \to r)$, respectively.

\begin{lemma}\label{lem:ScToU}
Let $\mathrsfs{A}$ be a PSc-automaton. Then:
\begin{enumerate}
\item If $\mathrsfs{A}$ is primitive then also $\mathrsfs{A}^{\mathrm{Sc}{\to}\mathrm{U}}$ is primitive.
\item $\mathrsfs{A}$ is synchronizing if and only if $\mathrsfs{A}^{\mathrm{Sc}{\to}\mathrm{U}}$ is synchronizing.
\end{enumerate}
\end{lemma}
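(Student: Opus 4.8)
The plan is to base the whole proof on the single observation that the transition monoid $M$ of $\mathrsfs{A}$ is contained in the transition monoid $M'$ of $\mathrsfs{A}^{\mathrm{Sc}\to\mathrm{U}}$ (both acting on the same set $Q$). This containment holds because every semiconstant transformation decomposes into unitary ones: $(\{s_1,\dots,s_k\}\to r)=(s_1\to r)(s_2\to r)\cdots(s_k\to r)$, in any order, since $r$ and the $s_i$ are pairwise distinct, so the composite maps each $s_i$ to $r$ and fixes every other state. Hence every generator of $M$ — a permutational letter, which is shared by both automata, or a semiconstant letter — lies in $M'$.

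Given this, part~(1) and the forward direction of~(2) are immediate. For~(1) I would argue contrapositively: if $\rho$ is invariant for $M'$, then $(\rho)M\subseteq(\rho)M'=\rho$, while $\rho\subseteq(\rho)M$ because the identity lies in $M$, so $\rho$ is invariant for $M$ as well; hence every congruence of $\mathrsfs{A}^{\mathrm{Sc}\to\mathrm{U}}$ is a congruence of $\mathrsfs{A}$, and a non-trivial one would contradict primitivity of $\mathrsfs{A}$. For the forward direction of~(2), a reset word of $\mathrsfs{A}$ induces a constant map lying in $M\subseteq M'$, so $\mathrsfs{A}^{\mathrm{Sc}\to\mathrm{U}}$ is synchronizing.

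The substance is the converse of~(2). I would use the criterion that a transformation monoid is synchronizing if and only if every pair of states is compressible, and show that a pair compressible in $\mathrsfs{A}^{\mathrm{Sc}\to\mathrm{U}}$ is compressible in $\mathrsfs{A}$. Fix a pair $\{p,q\}$ and a shortest word $u_1\cdots u_m$ over the alphabet of $\mathrsfs{A}^{\mathrm{Sc}\to\mathrm{U}}$ that compresses it, and let $X_i$ be the image of $\{p,q\}$ after reading $u_1\cdots u_i$, so $|X_i|=2$ for $i<m$ and $|X_m|=1$. Since a permutation never collapses a two-element set, $u_m$ is a unitary letter, say $(s\to r)$, and a short computation shows that the only two-element set it collapses is $\{s,r\}$; thus $X_{m-1}=\{s,r\}$. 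Now I would build a word over $\mathrsfs{A}$ by scanning $u_1\cdots u_m$: on a permutational $u_i$ append the same letter; on a unitary $u_i=(s'\to r')$ whose source $s'$ is not in the current two-element set append no letter; and on a unitary $u_i=(s'\to r')$ with $s'$ in the current set append the semiconstant letter $c=(S\to r')$ of $\mathrsfs{A}$ that was split to produce $(s'\to r')$, so that $s'\in S$ and $r'\notin S$. The invariant to maintain is that, after processing $u_1\cdots u_i$, the image of $\{p,q\}$ in $\mathrsfs{A}$ either equals $X_i$ or already has size one. The permutational and empty steps preserve it at once; and if the current set is $\{s',b\}$ and we append $c$, then $s'\mapsto r'$, while $b\mapsto b$ if $b\notin S$ (giving $X_{i+1}=\{r',b\}$) or $b\mapsto r'$ if $b\in S$ (the set collapses and we are done). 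In particular the last step acts on $\{s,r\}$ and the matching semiconstant sends it to $\{r\}$. In every case the word built over $\mathrsfs{A}$ compresses $\{p,q\}$, so $\mathrsfs{A}$ is synchronizing.

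The main obstacle, and the reason one cannot simply rewrite a full reset word of $\mathrsfs{A}^{\mathrm{Sc}\to\mathrm{U}}$ letter by letter into one of $\mathrsfs{A}$, is the divergence between the image sets of the two automata: replacing a unitary by its semiconstant may collapse extra states early, and afterwards the image set in $\mathrsfs{A}$ can fail to be equal to — or even contained in — that of $\mathrsfs{A}^{\mathrm{Sc}\to\mathrm{U}}$, with no guarantee that $\mathrsfs{A}$ can compensate. Working only with two-element sets is exactly what defeats this: a pair is rigid enough that whenever the semiconstant does strictly more than the unitary it must already compress the pair, and appending no letter when the unitary is idle prevents the semiconstant from introducing unwanted collapses. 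So the crucial design decision is to reduce to pair-compressibility first rather than manipulating reset words directly.
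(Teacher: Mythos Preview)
Your proof is correct and follows essentially the same approach as the paper: both observe that the transition monoid of $\mathrsfs{A}$ sits inside that of $\mathrsfs{A}^{\mathrm{Sc}\to\mathrm{U}}$ via the decomposition $(\{s_1,\dots,s_k\}\to r)=(s_1\to r)\cdots(s_k\to r)$, dispatch part~(1) and the forward direction of~(2) from this, and for the converse of~(2) reduce to pair-compressibility and process a compressing word of $\mathrsfs{A}^{\mathrm{Sc}\to\mathrm{U}}$ letter by letter, replacing each unitary by its parent semiconstant and noting that the pair-images either agree or the semiconstant has already collapsed the pair. The only cosmetic difference is that the paper phrases the last part as an induction on the length of the compressing word while you run an explicit scan with an invariant; the case analysis is identical.
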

\begin{proof}
A semiconstant transformation $(\{s_1,\ldots,s_k\} \to r)$ can be generated as the composition of the $k$ unitary transformations that replace it:
\[ (s_1 \to r)\cdot\ldots\cdot(s_k \to r) = (\{s_1,\ldots,s_k\} \to r) .\]
Hence, the transition monoid of $\mathrsfs{A}$ is a submonoid of $\mathrsfs{A}^{\mathrm{Sc}{\to}\mathrm{U}}$.
This implies~(1) and the $\Rightarrow$ direction of~(2).

For the $\Leftarrow$ direction of~(2), denote $\mathrsfs{A} = (Q, \Sigma_\mathrsfs{A}, \delta_\mathrsfs{A})$ and $\mathrsfs{B} = (Q, \Sigma_\mathrsfs{B}, \delta_\mathrsfs{B}) = \mathrsfs{A}^{\mathrm{Sc}{\to}\mathrm{U}}$.
We show by induction on $k$ that every pair of states $\{p,q\}$ which is compressible in $\mathrsfs{B}$ with a word of length $k$ is compressible in $\mathrsfs{A}$.
For $k=0$ it is trivial.
Assume the statement for $k$ and consider a pair $\{p,q\}$ which is compressible in $\mathrsfs{B}$ with a word $w = a u$ of length $k+1$, where $a \in \Sigma_\mathrsfs{B}$ and $u \in \Sigma^*_\mathrsfs{B}$.

If $a$ is permutational, then let $\tau(a) = a$, and the following equation holds trivially:
\begin{equation}
\big\{\delta_\mathrsfs{B}(p,a),\delta_\mathrsfs{B}(q,a)\big\} = \big\{\delta_\mathrsfs{A}(p,\tau(a)),\delta_\mathrsfs{A}(q,\tau(a))\big\} \label{eq:BtoA}
\end{equation}
Then the statement follows from the induction hypothesis since $u$ of length $k$ compresses the pair from~(\ref{eq:BtoA}) in~$\mathrsfs{B}$.

Otherwise, $a$ induces a unitary transformation $(s \to r)$.
If $s \notin \{p,q\}$, then $\delta_\mathrsfs{B}(\{p,q\},a) = \{p,q\}$, so again $u$ of length $k$ compresses the pair.
Without loss of generality, suppose $s = p$.
Then let $\tau(a)$ be a semiconstant letter that induces $(S \to r)$, where $s=p \in S$.
If $q \in S \cup \{r\}$, then $\tau(a)$ compresses $\{p,q\}$, so we are done.
Otherwise, $q \notin S \cup \{r\}$, thus (\ref{eq:BtoA}) holds again.

It follows that if $\mathrsfs{B}$ is synchronizing, thus all pairs of states are compressible, then also they are in $\mathrsfs{A}$, so $\mathrsfs{A}$ is synchronizing.
\end{proof}

\begin{theorem}
A primitive PSc-automaton is synchronizing, unless all its letters are permutational.
\end{theorem}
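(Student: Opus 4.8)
The plan is to reduce the statement to the PU-case already settled by \autoref{thm:U-Synchronizing}, using the transformation $\mathrsfs{A} \mapsto \mathrsfs{A}^{\mathrm{Sc}\to\mathrm{U}}$ and the two transfer properties of \autoref{lem:ScToU}. So I would not develop any new machinery; the whole argument is a bookkeeping combination of earlier results.

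First I would handle the degenerate case $n = 1$ separately, where the automaton is trivially synchronizing, and then assume $n \ge 2$. Let $\mathrsfs{A}$ be a primitive PSc-automaton that is not all-permutational, and fix a non-permutational letter $a$. Being semiconstant but not a permutation, $a$ induces some $(S \to r)$ with $S$ nonempty; since $(S \to r) = (S \setminus \{r\} \to r)$, I may assume $r \notin S$, hence $|S| \ge 1$. This normalization is the only slightly delicate point of the proof: it guarantees that $a$ really contributes at least one genuine unitary letter in the next step (a priori one must rule out that the replacement produces no unitary letters at all).

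Next I would pass to $\mathrsfs{B} = \mathrsfs{A}^{\mathrm{Sc}\to\mathrm{U}}$. By construction $\mathrsfs{B}$ is a PU-automaton, and the letter $a$ is replaced by $|S| \ge 1$ unitary letters, so $\mathrsfs{B}$ has at least one unitary letter. By \autoref{lem:ScToU}(1), $\mathrsfs{B}$ is primitive. Then \autoref{thm:U-Synchronizing} applies to $\mathrsfs{B}$ — a primitive PU-automaton with at least one unitary letter — giving that $\mathrsfs{B}$ is synchronizing. Finally, by \autoref{lem:ScToU}(2), synchronizability of $\mathrsfs{B}$ transfers back to $\mathrsfs{A}$, which completes the proof. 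I expect no substantial obstacle here: the real work has already been carried out in \autoref{lem:ScToU} (the decomposition of semiconstant maps into unitary maps, together with the compressibility induction) and in \autoref{thm:U-Synchronizing} (the closure-of-relations analysis of $\pi$); this final statement only glues them together.
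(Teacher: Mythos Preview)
Your proposal is correct and follows essentially the same route as the paper: apply \autoref{lem:ScToU}(1) to get primitivity of $\mathrsfs{A}^{\mathrm{Sc}\to\mathrm{U}}$, invoke \autoref{thm:U-Synchronizing}, and transfer synchronizability back via \autoref{lem:ScToU}(2). Your explicit verification that a non-permutational semiconstant letter yields at least one genuine unitary letter in $\mathrsfs{A}^{\mathrm{Sc}\to\mathrm{U}}$ is a detail the paper leaves implicit, but otherwise the arguments coincide.
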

\begin{proof}
If $\mathrsfs{A}$ is a primitive PSc-automaton, then by~\autoref{lem:ScToU}(1), $\mathrsfs{A}^{\mathrm{Sc}{\to}\mathrm{U}}$ is primitive, so by~\autoref{thm:U-Synchronizing} $\mathrsfs{A}^{\mathrm{Sc}{\to}\mathrm{U}}$ is synchronizing.
Then by~\autoref{lem:ScToU}(2), $\mathrsfs{A}$ is also synchronizing.
\end{proof}

\bibliography{bibliography}
\end{document}